\DeclareRobustCommand{\lintprod}{%
  \mathbin{\mathpalette\int@prod{(0,0)(0.8,0)(0.8,0.6)}}%
}
\DeclareRobustCommand{\rintprod}{%
  \mathbin{\mathpalette\int@prod{(0.1,0.6)(0.1,0)(0.9,0)}}}
\newcommand{\int@prod}[2]{%
  \begingroup
  \sbox\z@{$\m@th#1+$}%
  \setlength\unitlength{\wd\z@}%
  \linethickness{0.09\unitlength}%
  \begin{picture}(1,1)
  \roundcap
  \polyline#2
  \end{picture}%
  \endgroup
}
\setheadfoot{\onelineskip}{2\onelineskip}
\newcommand{\Bb}{{\mathbf B}}
\newcommand{\Eb}{{\mathbf E}}
\newcommand{\Fb}{{\mathbf F}}
\newcommand{\Jb}{{\mathbf J}}
\newcommand{\Rb}{{\mathbf R}}
\newcommand{\Sb}{{\mathbf S}}
\newcommand{\ebf}{{\mathbf e}}
\newcommand{\fb}{{\mathbf f}}
\newcommand{\fbit}{\pmb{\mathit f}}
\newcommand{\jb}{{\mathbf j}}
\newcommand{\xb}{{\mathbf x}}
\newcommand{\yb}{{\mathbf y}}
\newcommand{\vb}{{\mathbf v}}
\newcommand{\ub}{{\mathbf u}}
\newcommand{\wb}{{\mathbf w}}
\newcommand{\drm}{\,\mathrm{d}}
\newcommand{\myparallel}{{\mkern3mu\vphantom{\perp}\vrule depth 0pt\mkern3mu\vrule depth 0pt\mkern3mu}}
\newcommand{\deltabf}{\boldsymbol{\partial}}
\newcommand{\len}[1]{\lvert#1\rvert}
\newcommand{\hodge}{{\scriptscriptstyle\mathcal{H}}}
\newcommand{\hodgeinv}{{\scriptscriptstyle\mathcal{H}^{-1}}}
\let\save@mathaccent\mathaccent
\newcommand*\if@single[3]{%
  \setbox0\hbox{${\mathaccent"0362{#1}}^H$}%
  \setbox2\hbox{${\mathaccent"0362{\kern0pt#1}}^H$}%
  \ifdim\ht0=\ht2 #3\else #2\fi
  }
\newcommand*\rel@kern[1]{\kern#1\dimexpr\macc@kerna}
\newcommand*\widebar[1]{\@ifnextchar^{{\wide@bar{#1}{0}}}{\wide@bar{#1}{1}}}
\newcommand*\wide@bar[2]{\if@single{#1}{\wide@bar@{#1}{#2}{1}}{\wide@bar@{#1}{#2}{2}}}
\newcommand*\wide@bar@[3]{%
  \begingroup
  \def\mathaccent##1##2{%
    \let\mathaccent\save@mathaccent
    \if#32 \let\macc@nucleus\first@char \fi
    \setbox\z@\hbox{$\macc@style{\macc@nucleus}_{}$}%
    \setbox\tw@\hbox{$\macc@style{\macc@nucleus}{}_{}$}%
    \dimen@\wd\tw@
    \advance\dimen@-\wd\z@
    \divide\dimen@ 3
    \@tempdima\wd\tw@
    \advance\@tempdima-\scriptspace
    \divide\@tempdima 10
    \advance\dimen@-\@tempdima
    \ifdim\dimen@>\z@ \dimen@0pt\fi
    \rel@kern{0.6}\kern-\dimen@
    \if#31
      \overline{\rel@kern{-0.6}\kern\dimen@\macc@nucleus\rel@kern{0.4}\kern\dimen@}%
      \advance\dimen@0.4\dimexpr\macc@kerna
      \let\final@kern#2%
      \ifdim\dimen@<\z@ \let\final@kern1\fi
      \if\final@kern1 \kern-\dimen@\fi
    \else
      \overline{\rel@kern{-0.6}\kern\dimen@#1}%
    \fi
  }%
  \macc@depth\@ne
  \let\math@bgroup\@empty \let\math@egroup\macc@set@skewchar
  \mathsurround\z@ \frozen@everymath{\mathgroup\macc@group\relax}%
  \macc@set@skewchar\relax
  \let\mathaccentV\macc@nested@a
  \if#31
    \macc@nested@a\relax111{#1}%
  \else
    \def\gobble@till@marker##1\endmarker{}%
    \futurelet\first@char\gobble@till@marker#1\endmarker
    \ifcat\noexpand\first@char A\else
      \def\first@char{}%
    \fi
    \macc@nested@a\relax111{\first@char}%
  \fi
  \endgroup
}
    \def\@endtheorem{\hfill$\P$\endtrivlist\@endpefalse }
\newtheorem{definition}{Definition}
\newtheorem{theorem}{Theorem}
\newtheorem{lemma}{Lemma}
\begin{document}

\title{An Introduction to Space-Time Exterior Calculus\thanks{
This work was funded in part by the Spanish Ministry of Economy and Competitiveness under grants TEC2016-78434-C3-1-R and BES-2017-081360. Published in: \textbf{Mathematics (2019), 7(6), 564. DOI:} \href{https://www.mdpi.com/2227-7390/7/6/564}{10.3390/math7060564}.
}}

\author{\scshape Ivano Colombaro\thanks{ivano.colombaro@upf.edu}, Josep Font-Segura\thanks{josep.font@ieee.org}, Alfonso Martinez\thanks{alfonso.martinez@ieee.org}\thanks{Authors are with the Department of Information and Communication Technologies, Universitat Pompeu Fabra, Barcelona, Spain.}}

\maketitle

\begin{abstract}
The basic concepts of exterior calculus for space-time multivectors are presented: interior and exterior products, interior and exterior derivatives, oriented integrals over hypersurfaces, circulation and flux of multivector fields.
Two Stokes theorems relating the exterior and interior derivatives with circulation and flux respectively are derived. 
As an application, it is shown how the exterior-calculus space-time formulation of the electromagnetic Maxwell equations and Lorentz force recovers the standard vector-calculus formulations, in both differential and integral forms.
\end{abstract}

\section{Introduction} \label{sec-intro}

Vector calculus has, since its introduction by J. W. Gibbs \cite{Gibbs} and Heaviside, been the tool of choice to represent many physical phenomena.
In mechanics, hydrodynamics and electromagnetism, quantities such as forces, velocities and currents are modeled as vector fields in space, while flux, circulation, divergence or curl describe operations on the vector fields themselves.

With relativity theory, it was observed that space and time are not independent but just coordinates in space--time \cite{Minkowski} (pp.~111--120).
Tensors like the Faraday tensor in electromagnetism were quickly adopted as a natural representation of fields in space--time \cite{Ricci-Levi-Civita} (pp.~135--144). In parallel, mathematicians such as Cartan generalized the fundamental theorems of vector calculus, i.e., Gauss, Green, and Stokes, by means of differential forms \cite{Cartan}.
Later on, differential forms were used in Hamiltonian mechanics, e.~g.~to calculate trajectories as vector field integrals \cite{Arnold} (pp.~194--198).

A third extension of vector calculus is given by geometric and Clifford algebras \cite{Clifford}, where vectors are replaced by multivectors and 
operations such as the cross and the dot products subsumed in the geometric product. However, the absence of an explicit formula for the geometric product hinders its widespread use.
An alternative would have been the exterior algebra developed by Grassmann which nevertheless has received little attention in the literature \cite{Grassmann}. An early work in this direction was Sommerfeld's presentation of electromagnetism in terms of six-vectors \cite{Sommerfeld}.

We present a generalization of vector calculus to exterior algebra and calculus. The basic notions of space--time exterior algebra, introduced in Section~\ref{sec-EC-basics},  are extended to exterior calculus in Section~\ref{sec-flux-circ} and applied to rederive the equations of electromagnetism in Section~\ref{sec-maxwell}. In contrast to geometric algebra, our interior and exterior products admit explicit formulations, thereby merging the simplicity and intuitiveness of standard vector calculus with the power of tensors and differential forms. 

\section{Exterior Algebra} \label{sec-EC-basics}

Vector calculus is constructed around the vector space $\Rb ^3$, where every point is represented by three spatial coordinates.
In relativity theory the underlying vector space is $\Rb^{1+3}$ and time is treated as a coordinate in the same footing as the three spatial dimensions. 
We build our theory in space--time with $k$ time dimensions and $n$ space dimensions.
The number of space--time dimensions is thus $k+n$ and we may refer to a $(k,n)$- or $(k+n)$-space--time, $\Rb ^{k+n}$. 
We adopt the convention that the first $k$ indices, i.e., $i=0,\dotsc,k-1$, correspond to time components and the indices $i = k,\dotsc,k+n-1$ represent space components and both $k$ and $n$ are non-negative integers.
A point or position in this space--time is denoted by $\xb$, with components $\smash{\{x_i\}_{i=0}^{k+n-1}}$ in the canonical basis $\smash{\{\ebf_i\}_{i=0}^{k+n-1}}$, that is 
\begin{equation}
\xb = \sum_{i=0}^{k+n-1}x_i\ebf_i	.
\end{equation}

Given two arbitrary canonical basis vectors $\ebf_{i}$ and $\ebf_{j}$, then their dot product in space--time is
\begin{equation}\label{eq:dot_vec_basis}
\ebf_{i}\cdot\ebf_{j} = \begin{cases} -1, & i = j, \, 0\leq i \leq k-1, \\ 
+1, & i = j, \, k\leq i \leq k+n-1, \\
 0, & i \neq j. \end{cases}
\end{equation}
For convenience, 
we define the symbol $\Delta_{ij} = \ebf_{i}\cdot\ebf_{j}$ as the metric diagonal tensor in Minkowski space--time \cite{Minkowski} (pp.~118--120), such that time unit vectors $\ebf_{i}$ have negative norm $\Delta_{ii} = -1$, whereas space unit vectors $\ebf_{i}$ have positive norm $\Delta_{ii} = +1$. The \textbf{dot product} 
of two vectors $\xb$ and $\yb$ is the extension by linearity of the product in~Equation \eqref{eq:dot_vec_basis}, namely
\begin{equation}
 \xb\cdot\yb = \sum_{i=0}^{k+n-1}x_iy_i\Delta_{ii} = -\sum_{i=0}^{k-1}x_iy_i + \sum_{i=k}^{k+n-1}x_iy_i.
\end{equation}


\subsection{Grade, Multivectors, and Exterior Product}

In addition to the $(k+n)$-dimensional vector space $\Rb^{k+n}$ with canonical basis vectors $\ebf_{i}$, there exist other natural vector spaces indexed by ordered lists $I = (i_1,\dotsc,i_m)$ of $m$ non-identical space and time indices for every $m = 0, \dotsc, k+n$. As there are $\binom{k+n}{m}$ such lists, the dimension of this vector space is $\binom{k+n}{m}$. We shall refer to $m$ as \textbf{grade} and to these vectors as \textbf{multivectors} or grade-$m$ vectors if we wish to be more specific. A general multivector can be written as 
\begin{equation}\label{eq:multivector}
\vb = \sum_{I} v_I\ebf_I,
\end{equation}
where the summation extends to all possible ordered lists with $m$ indices. If $m = 0$, the list is empty and the corresponding vector space is $\Rb$. 
The direct sum of these vector spaces for all $m$ is a larger vector space of dimension $\sum_{m=0}^{k+n} \binom{k+n}{m} = 2^{k+n}$, the exterior algebra. In tensor algebra, multivectors correspond to antisymmetric tensors of rank $m$. In this paper, we study vector fields $\vb(\xb)$, namely multivector-valued functions $\vb$ varying over the space--time position $\xb$.

The basis vectors for any grade $m$ may be constructed from the canonical basis vectors $\ebf_i$ by means of the exterior product (also known as wedge product), an operation denoted by $\wedge$ \cite{Winitzki} (p.~2). We identify the vector $\ebf_I$ for the ordered list $I = (i_1,i_2,\dotsc,i_m)$ with the exterior product of $\ebf_{i_1}, \ebf_{i_2} ,\dotsc, \ebf_{i_m}$: 
\begin{equation}
	\ebf_I = \ebf_{i_1}\wedge\ebf_{i_2}\wedge\dotsm\wedge\ebf_{i_m} .
\end{equation}
In general, we may compute the exterior product as follows.
Let two basis vectors $\ebf_I$ and $\ebf_J$ have grades $m = \len{I}$ and $m' = \len{J}$, where $\len{I}$ and $\len{J}$ are the  lengths of the respective index lists. Let $(I,J) = \{i_1,\dotsc,i_m,j_1,\dotsc,j_{m'}\}$ denote the concatenation of $I$ and $J$, let $\sigma(I,J)$ denote the signature of the permutation sorting the elements of this concatenated list of $m+m'$ indices, and let $\varepsilon(I,J)$ denote the resulting sorted list, which we also denote by $I+J$. Then, the exterior product $\ebf_I$ of $\ebf_J$ is defined as
\begin{equation} \label{eq:ext-prod-def}
	\ebf_I\wedge\ebf_J = \sigma(I,J)\ebf_{\varepsilon(I,J)}.
\end{equation}
The \textbf{exterior product} of vectors $\vb$ and $\wb$ is the bilinear extension of the product in~Equation \eqref{eq:ext-prod-def}, 
\begin{equation}
	\vb\wedge\wb = \sum_{I,J}v_I w_J \, \ebf_I\wedge\ebf_J .
\end{equation}
Since permutations with repeated indices have zero signature, the exterior product is zero if \mbox{$m + m' > k+n$} or more generally if both vectors have at least one index in common. Therefore, the exterior product is either zero or a vector of grade $m+m'$.	Further, the exterior product is a skew-commutative operation, as we can also write~Equation \eqref{eq:ext-prod-def} as $\ebf_I\wedge\ebf_J = (-1)^{\len{I}\len{J}}\ebf_J\wedge\ebf_I$.
	
At this point, we define the dot product $\cdot$ for arbitrary grade-$m$ basis vectors $\ebf_I$ and $\ebf_J$ as
\begin{equation}
	\ebf_I\cdot\ebf_J = \Delta_{I,J} = \Delta_{i_1,j_1}\Delta_{i_2,j_2}\dotsm\Delta_{i_m,j_m},
\end{equation}	
where $I$ and $J$ are the ordered lists $I = (i_1,i_2,\dotsc,i_m)$ and $J = (j_1,j_2,\dotsc,j_m)$. As before, we extend this operation to arbitrary grade-$m$ vectors by linearity.

Finally, we define the complement of a multivector. For a unit vector $\ebf_I$ with grade $m$, its Grassmann or Hodge \textbf{complement} \cite{Frankel} (pp.~361--364), denoted by $\ebf_I^\hodge$, is the unit $(k+n-m)$-vector 
\begin{equation} \label{eq:hodge-transf}
	\ebf_I^\hodge = \Delta_{I,I}\sigma(I,I^c)\ebf_{I^c},
\end{equation}
where $I^c$ is the complement of the list $I$, namely the ordered sequence of indices not included in $I$.
As before, $\sigma(I,I^c)$ is the signature of the permutation sorting the elements of the concatenated list $(I,I^c)$ containing all space--time indices.
In other words $\ebf_{I^c}$ is the basis vector of grade $k+n-m$ whose indices are in the complement of $I$. In addition, we define the inverse complement transformation as
\begin{equation} \label{eq:hodge-inv-transf}
	\ebf_I^\hodgeinv = \Delta_{I^c,I^c}\sigma(I^c,I)\ebf_{I^c} .
\end{equation}
We extend the complement and its inverse to general vectors in the space--time algebra by linearity.

\subsection{Interior Products}

While the exterior product of two multivectors is an operation that outputs a multivector whose grade is the addition of the input grades, the dot product takes two multivectors of identical grade and subtracts their grades, yielding a zero-grade multivector, i.e., a scalar. We say that the exterior product raises the grade while the dot product lowers the grade. In this section, we define the left and right interior products of two multivectors as operations that lower the grade and output a multivector whose grade is the difference of the input multivector grades.

As always, we start by defining the operation for the canonical basis vectors. 
Let $\ebf_I$ and $\ebf_J$ be two basis vectors of respective grades $\len{I}$ and $\len{J}$. The left interior product, denoted by $\lintprod$, is defined as
\begin{equation}
	\ebf_I \lintprod \ebf_J = \Delta_{I,I}\sigma\bigl(\varepsilon(I,J^c)^c,I\bigr)\ebf_{\varepsilon(I,J^c)^c}.
	\label{eq:left-int-prod}
\end{equation}
If $I$ is not a subset of $J$, that is when there are elements in $I$ not present in $J$, e.~g.~for $\len{I} > \len{J}$, the signature of the permutation sorting the concatenated list $\bigl(\varepsilon(I,J^c)^c,I\bigr)$ is zero as there are repeated indices in the list to be sorted, and the left interior product is zero. 
Otherwise, if $I$ is a subset of $J$, the permutation rearranges the indices in $J$ in such a way that the last $\len{I}$ positions coincide with $I$ and $\varepsilon(I,J^c)^c$ represents the first $\len{J}-\len{I}$ elements in the rearranged sequence, that is $\varepsilon(I,J^c)^c = J\setminus I$.

The right interior product, denoted by $\rintprod$, of two basis vectors $\ebf_I$ and $\ebf_J$ is defined as
\begin{equation}
	\ebf_I \rintprod \ebf_J = \Delta_{J,J}\sigma\bigl(J,\varepsilon(I^c,J)^c\bigr)\ebf_{\varepsilon(I^c,J)^c}. \label{eq:right-int-prod}
\end{equation}
As with the left interior product, if $J$ is a subset of $I$, $\varepsilon(I^c,J)^c = I\setminus J$ then the permutation rearranges the indices in $I$ so that the first $\len{J}$ positions coincide with $J$, otherwise the right interior product is zero.

In general, we have that $\ebf_I \lintprod \ebf_J = \ebf_J \rintprod \ebf_I (-1)^{\len{I}(\len{J}-\len{I})}$, as verified in Appendix~\ref{sec-app-left-right}. We note that these interior products are not commutative, unless either $\len{J} - \len{I}$ or $\len{I}$ is an even number, 
e.~g.~when $\len{I} = \len{J}$, in which case both interior products coincide with the dot product of the two vectors.  The interior products may therefore be seen as generalizations of the dot product. 

As with the dot and the exterior products, the value of the interior products does not depend on the choice of basis and we may thus compute the \textbf{left interior product} of two vectors $\vb$ and $\wb$ as 
\begin{equation}
	\vb\lintprod\wb = \sum_{I,J}v_I w_J \, \ebf_I\lintprod\ebf_J \,,
\end{equation}
and a similar expression holds for the \textbf{right interior product} $\vb\rintprod\wb$.	Both are grade-lowering operations, as the left (resp.~right) interior product is either zero or a multivector of grade $m'-m$ (resp.~$m-m'$).

The interior products are not independent operations from the exterior product, as they can be expressed in terms of the latter, the Hodge complement and its inverse (proved in Appendix~\ref{sec-app-int-ext}):
\begin{gather}\label{eq:left-int-equiv}
\ebf_I \lintprod \ebf_J = \bigl( \ebf_I \wedge \ebf_J^{\hodge} \bigr)^{\hodgeinv}, \\
\label{eq:right-int-equiv}
\ebf_I \rintprod \ebf_J = \bigl( \ebf_I^{\hodgeinv} \wedge \ebf_J \bigr)^{\hodge}.
\end{gather}

If $\ub$ and $\vb$ are 1-vectors and $\wb$ is an $r$-vector, then we have the following expression
\begin{gather}\label{eq:left-wedge}
	\ub\lintprod(\vb\wedge\wb) = (-1)^{r}(\ub\cdot\vb)\wb + \vb\wedge(\ub\lintprod\wb),
\end{gather}
as proved in Appendix~\ref{sec-app-mix-prod}. This expression can be seen as a generalization of the vectorial expression
\begin{equation}
\mathbf{a} \times (\mathbf{b} \times \mathbf{c}) = (\mathbf{a} \cdot \mathbf{c})\mathbf{b} - (\mathbf{a} \cdot \mathbf{b})\mathbf{c}
\end{equation}
in the vector space $\Rb ^3$, i.e., a $k=0$, $n=3$ space--time. This fact is built of the realization that the cross product between two vectors $\vb$ and $\wb$ can be expressed in the following alternative ways 
\begin{equation}\label{eq:cross-prod-3d}
	\vb\times\wb = (\vb\wedge\wb)^\hodgeinv = \vb\lintprod\wb^\hodgeinv = \vb\lintprod\wb^\hodge.
\end{equation}
Whenever it holds that $I\subseteq J$, the interior and exterior products are related by the following:
\begin{gather}
(\ebf_I\lintprod \ebf_J) \wedge \ebf_I 	=	\Delta_{I,I} \ebf_J,	\\
\ebf_I \wedge (\ebf_J\rintprod \ebf_I) = \Delta_{I,I} \ebf_J.
\end{gather}

Having introduced the basic notions of space--time exterior algebra, the next section focuses on operations with elements in the exterior algebra, namely integrals and derivatives of vector fields.

\section{Integrals and Derivatives of Vector Fields: Circulation and Flux}	\label{sec-flux-circ}

\subsection{Oriented Integrals}

Integrals are, together with derivatives, the fundamental mathematical objects of \textbf{calculus}. For example, operations on vectors fields lying in exterior algebra such as the flux and the circulation are expressed in terms of integrals over high-dimensional geometric objects. The integral of an $m$-graded vector field $\vb$ over a hypersurface ${\mathcal V}^m$ of the same dimension, denoted as
\begin{equation}
\int _{\mathcal{V}^m} \drm ^m\xb\cdot\vb,
\label{eq:integral}
\end{equation}
is the limit of the Riemann sums for the dot product $\drm^{m}\xb\cdot\vb$ over points in the hypersurface, where $\drm^{m}\xb$ is an $m$-dimensional infinitesimal vector element. For any $\ell = 0,\dotsc,k+n$, the infinitesimal vector element $\drm^{\ell}\xb$ is given by the sum of all possible differentials for $\ell$-dimensional hypersurfaces in a $(k,n)$ space--time, and is represented in the canonical basis as
\begin{equation} \label{eq:differential-element}
	\drm^{\ell}\xb = \sum_{I=(i_1,\dotsc,i_\ell)}\drm x_{I}\ebf_I,
\end{equation}
where for a given list $I=(i_1,\dotsc,i_\ell)$ each differential is given by $dx_{I}=\drm x_{i_1}\cdots \drm x_{i_\ell}$.

As in traditional calculus, the integral in~Equation \eqref{eq:integral} exhibits coordinate invariance, while the integrand $\drm ^m\xb\cdot\vb$ is regarded as an oriented object. Orientation is well defined for integrals along a curve from one point to another, or integrals over a surface oriented at the direction of the normal to the surface. Switching the extreme points of the curve, or taking the opposite direction of the normal would induce a change of sign in the line and surface integrals. In our generalization of vector calculus, a positive orientation is implicit in the ordering of the canonical basis. The skew-symmetry property of the exterior product~Equation \eqref{eq:ext-prod-def} may introduce sign changes to compensate an eventual change of orientation after changes of coordinates such as permutations of the space--time components.

For a given hypersurface ${\mathcal V}^m$, a convenient transformation for solving the integral in~Equation \eqref{eq:integral} is one such that, at a given point $\mathbf{x}$ in the hypersurface, the infinitesimal vector element $\drm ^m\xb$ has one component that is \emph{tangent} to the hypersurface at that point. Let $\ebf_\myparallel$ be a unit $m$-graded vector parallel to ${\mathcal V}^m$ at point $\xb$, and let $\ebf_0^\prime,\ldots,\ebf_{k+n-1}^\prime$ form an orthonormal basis of $\Rb^{k+n}$ such that $\ebf_\myparallel=\ebf_{k+n-m}^\prime\wedge \cdots \wedge \ebf_{k+n-1}^\prime$ for the given point $\xb$ in ${\mathcal V}^m$. This change of coordinates from the canonical basis to the new basis is described by a unitary matrix $U$, dependent on $\xb$, and that satisfies
\begin{equation}
	\ebf_0\wedge\dotsm \wedge \ebf_{k+n-1} = \det(U) \, \ebf_0^\prime\wedge\dotsm \wedge \ebf_{k+n-1}^\prime.
	\label{eq:det}
\end{equation}
Being a unitary matrix, the determinant of $U$ is $\pm 1$. Assuming an orientation-preserving change of coordinates, that is $\det(U)=1$,  the infinitesimal vector element in~Equation \eqref{eq:differential-element} for $\ell = m$ can be expressed as
\begin{equation} \label{eq:differential-element2}
	\drm^{m}\xb = \drm x_\myparallel \ebf_\myparallel + \sum_{I=(i_1,\dotsc,i_m)\,:\, I \cap \perp \neq \emptyset}\drm x_{I}\ebf_I^\prime,
\end{equation}
where $\perp=\{0,\ldots,k+n-m-1\}$ is the set of indices for the unit vectors in the new basis orthogonal to ${\mathcal V}^m$. Since all elements in the summation in~Equation \eqref{eq:differential-element2} have at least one differential element lying outside the integration hypersurface, their integrals vanish and therefore
\begin{equation}
	\int _{\mathcal{V}^m} \drm ^m\xb = \int _{\mathcal{V}^m}  dx_\myparallel \ebf_\myparallel.
	\label{eq:parallel}
\end{equation}

In analogy to $\ebf_\myparallel$, a multivector of grade $m$, we define a unit $(k+n-m)$-grade vector $\ebf_\perp$ normal to ${\mathcal V}^m$ at point $\xb$ such that $\ebf_\perp\wedge\ebf_\myparallel=\ebf_0\wedge\dotsm\wedge \ebf_{k+n-1}$. 
From~Equation \eqref{eq:hodge-inv-transf}, we see that one such normal multivector with the correct orientation is
\begin{equation}
	\ebf_\perp = 	\frac{\ebf_\myparallel^\hodgeinv}{\ebf_\myparallel^\hodgeinv\cdot \ebf_\myparallel^\hodgeinv}.
\end{equation}
For the common spaces considered in vector calculus, $\Rb^2$ and $\Rb^3$, and according to~Equation \eqref{eq:det}, orientation-preserving changes of coordinates must respectively satisfy $\ebf_\perp\wedge\ebf_\myparallel= \ebf_0\wedge \ebf_1$ and $\ebf_\perp\wedge\ebf_\myparallel= \ebf_0\wedge \ebf_1 \wedge \ebf_2$, where $\ebf_\perp$ is the basis element normal to ${\mathcal V}^m$. These two equalities turn out to describe the counterclockwise (resp.~right-hand rule) orientation when $\ebf_\perp$ conventionally points outside an integration path for $\Rb^2$ (resp.~a surface for $\Rb^3$) \cite{Arnold} (pp.~184--185).

Building on the concepts and operations of circulation and flux in vector calculus, the right and left interior products lead to general definitions of circulation and flux of multivector fields in exterior algebra along and across hypersurfaces of arbitrary number of dimensions.

\subsection{Circulation and Flux of Multivector Fields}

\begin{definition}
The circulation of a vector field $\vb(\xb)$ of grade $m$ along an $\ell$-dimensional hypersurface $\mathcal{V}^\ell$, denoted by $\mathcal{C}(\vb,\mathcal{V}^\ell)$, is given by
\begin{gather}
	\mathcal{C}(\vb,\mathcal{V}^\ell) = \int_{\mathcal{V}^\ell}\drm^{\ell}\xb\rintprod\vb .
	\label{eq:def1}
\end{gather}
\end{definition}
Expressing the vector field in the canonical basis and using the definition of $\drm^{\ell}\xb$ in~Equation \eqref{eq:differential-element}, the circulation can be specified in some cases of interest. For $\ell=m$, the circulation reads
\begin{equation} \label{eq:circulation-equal}
	\int_{\mathcal{V}^m}\drm^{m}\xb\cdot\vb= \sum_{I=(i_1,\dotsc,i_m)}\Delta_{I,I}\int_{\mathcal{V}^m}dx_{I}v_I.
\end{equation}
For instance, for $\ell = m = 1$ and $\Rb^n$, this formula recovers the definition the circulation of a vector field along a closed path with the appropriate orientation. 

Alternatively, using~Equation \eqref{eq:parallel}, we note that $\vb$ is integrated along the direction of $\ebf_\myparallel$, tangential to the hypersurface, in an orientation-preserving change of coordinates, that is
\begin{equation}
\int _{\mathcal{V}^m} \drm ^m\xb\rintprod\vb = \int _{\mathcal{V}^m}  \drm x_\myparallel  \, \ebf_\myparallel \rintprod \vb.
\label{eq:int1}
\end{equation}
Intuitively, the circulation~Equation \eqref{eq:def1} measures the alignment of an $m$-vector field $\vb$ with respect to ${\mathcal V}^\ell$ for any $\ell$ and $m$, with the circulation being an $(\ell-m)$-vector if $\ell \geq m$ and zero otherwise.


\begin{definition}
The flux of a vector field $\vb(\xb)$ of grade $m$ across an $\ell$-dimensional hypersurface $\mathcal{V}^\ell$, denoted by $\mathcal{F}(\vb,\mathcal{V}^\ell)$, is given by 
\begin{gather}
	\mathcal{F}(\vb,\mathcal{V}^\ell) = \int_{\mathcal{V}^\ell}\drm^{\ell}\xb^{\hodgeinv}\lintprod\vb .
	\label{eq:def2}
\end{gather}
\end{definition}
Expressing both $\vb$ and $\drm^{\ell}\xb$ in the canonical basis, and using the inverse Hodge operation in~Equation \eqref{eq:hodge-inv-transf}, the flux in the special case of $\ell = k + n - m$ can be written as
\begin{equation} \label{eq:flux-equal}
	\int_{\mathcal{V}^\ell}\drm^{\ell}\xb^{\hodgeinv}\cdot\vb = \sum_{I=(i_1,\dotsc,i_{m})}\sigma(I,I^c)\int_{\mathcal{V}^{\ell}}dx_{I^c}v_I.
\end{equation}
As an example in $\Rb^3$, the flux of a vector field $\vb$ through a surface ${\mathcal{V}^2}$ reads
\begin{equation}
	\int _{\mathcal{V}^2} \drm^2\xb^{\hodgeinv}\cdot\vb = \int _{\mathcal{V}^2} \sum_{I,i\notin I} \drm {x_I} \sigma(i,I) \ebf_i \cdot \vb.
	\label{eq:24}
\end{equation}
The right-hand side of~Equation \eqref{eq:24} is a conventional surface integral, upon the identification of $\sum_{I,i\notin I} \drm {x_I} \sigma(i,I) \ebf_i$ as an infinitesimal surface element $\drm\Sb$.

Alternatively, using the analogous of Equation \eqref{eq:parallel} for the differential vector element $\drm ^{\ell}\xb^{\hodgeinv}$, the equivalent to~Equation \eqref{eq:int1} for the flux is
\begin{equation}
\int _{\mathcal{V}^\ell} \drm ^{\ell}\xb^{\hodgeinv}\lintprod\vb = \int _{\mathcal{V}^\ell} \drm x_\myparallel \, \ebf_\myparallel^{\hodgeinv} \lintprod \vb.
\label{eq:int2}
\end{equation}
This equation implies that $\vb$ is integrated along a normal component to the hypersurface since  $\smash{\ebf_\myparallel^{\hodgeinv}}$ is a multivector of grade $k+n-\ell$ orthogonal to ${\mathcal V}^\ell$. 
Intuitively, the flux~Equation \eqref{eq:def2} measures the magnitude of the multivector field crossing the hypersurface. In general, the flux is a vector of grade $(m+\ell-n-k)$ if $\ell \geq k+n-m$ and zero otherwise. For instance, if $\ell = k + n$, the flux of $\vb$ over an $(k+n)$-dimensional hypersurface $\mathcal{V}^{k+n}$ gives the integral of $\vb$ over $\mathcal{V}^{k+n}$, an extension of the volume integral to $\Rb^{k+n}$, 
\begin{equation}
	\int_{\mathcal{V}^{k+n}}\drm^{k+n}\xb^{\hodgeinv}\lintprod\vb = \int_{\mathcal{V}^{k+n}}\drm x_{i_1,\dotsm,i_{k+n}}\vb,
\end{equation}
where we used the relation $1^\hodge = \ebf_{i_1,\dotsc,i_{k+n}}$, implying that $\drm^{k+n}\xb^{\hodgeinv} = \drm x_{i_1,\dotsm,i_{k+n}}$, and that $1\lintprod \vb = \vb$.

\subsection{Exterior and Interior Derivatives}
	
In vector calculus, extensive use is made of the nabla operator $\nabla$, a vector operator that takes partial space derivatives. For instance, operations such as gradient, divergence or curl are expressed in terms of this operator. In our case, we need the generalization to $(k,n)$ space--time to the differential vector operator $\deltabf$, defined as  $(-\partial_0,-\partial_2,\dotsc,-\partial_{k-1}, \partial_{k},\dotsc,\partial_{k+n-1})$, that is
\begin{equation}
	\deltabf = \sum_{i=0}^{k+n-1}\Delta_{ii}\ebf_i\partial_i.
\end{equation}	

For a given vector field $\vb$ of grade $m$, we define the \textbf{exterior derivative} of $\vb$ as $\deltabf\wedge \vb$, namely
\begin{equation}
\deltabf\wedge \vb = \sum_{i=0}^{k+n-1} \sum_I \Delta_{ii}\partial_i v_I \sigma(i,I) \, \ebf_{\varepsilon(i,I)} .
\end{equation}
The grade of the exterior derivative of $\vb$ is $m+1$, unless $m = k+n$, in which case the exterior derivative is zero, as can be deduced from the fact that all signatures are zero.

In addition, we define the \textbf{interior derivative} of $\vb$ as $\deltabf\lintprod \vb$, namely
\begin{equation}
\deltabf\lintprod \vb = 
\sum_{i,I:\, i\in I} \partial_i v_I \sigma(I\setminus i,i) \ebf_{I\setminus i} .
\end{equation}
The grade of the interior derivative of $\vb$ is $m-1$, unless $m = 0$, in which case the interior derivative is zero, as implied by the fact that the grade of $\deltabf$ is larger than the grade of $\vb$. Using~Equation \eqref{eq:left-wedge} with $\ub=\deltabf$ and assuming that  $\vb$ and $\wb$ are 1-vectors, we obtain a generalization of Leibniz's product rule
\begin{gather}
	\deltabf\lintprod(\vb\wedge\wb) =  \vb (\deltabf\cdot\wb) -(\deltabf\cdot\vb)\wb .
\end{gather}

The formulas for the exterior and interior derivatives allow us express some common expressions in vector calculus. For a scalar function $\phi$, its gradient is given by its exterior derivative \mbox{$\nabla\phi = \deltabf\wedge\phi$}, while for a vector field $\vb$, its divergence $\nabla\cdot\vb$ is given by its interior derivative $\nabla\cdot\vb = \deltabf\lintprod\vb$. 
From~Equation \eqref{eq:left-wedge} we further observe that for a scalar function $\phi$ we recover the relation
\begin{equation}
\nabla \cdot  (\nabla \phi) = (\nabla \cdot \nabla) \phi.
\end{equation}
In addition, for a vector fields $\vb$ in $\Rb^3$, taking into account~Equation \eqref{eq:cross-prod-3d} then the curl can be variously expressed as
\begin{equation}\label{eq:link-curl-3d}
	\nabla\times\vb = (\nabla\wedge\vb)^\hodgeinv = \nabla\lintprod\vb^\hodgeinv = \nabla\lintprod\vb^\hodge.
\end{equation}
This formula allows us to write the curl of a vector field $\nabla\times\vb$ in terms of the exterior and interior products and the Hodge complement, while generalizing both the cross product and the curl to grade-$m$ vector fields in space--time algebras with different dimensions. 
Moreover, from~Equation \eqref{eq:left-wedge} we can recover for $r=1$ the well-known formula for the curl of the curl of a vector,
\begin{equation}
\nabla \times ( \nabla \times \vb ) =  \nabla (\nabla \cdot \vb) -\nabla ^2 \vb .
\end{equation}

It is easy to verify that the exterior derivative of an exterior derivative is zero, as is the interior derivative of an interior derivative, that is for any vector field $\vb$, we have that 
\begin{align}
	\deltabf \wedge (\deltabf \wedge \vb) &= 0 \\
	\deltabf \lintprod (\deltabf \lintprod \vb) &= 0.
\end{align}
In regard to the vector space $\Rb ^3$, and using~Equation \eqref{eq:cross-prod-3d}, these expressions imply the well-known facts that the curl of the gradient and the divergence of the curl are zero:
\begin{gather} 
\nabla \times (\nabla \phi) = (\nabla \wedge (\nabla \wedge \phi))^{\hodgeinv} = 0 \\
\nabla \cdot (\nabla \times \vb) = \nabla \lintprod (\nabla \lintprod \vb ^\hodge) = 0.
\end{gather}

\subsection{Stokes Theorem for the Circulation}

In vector calculus in $\Rb^3$, the Kelvin-Stokes theorem for the circulation of a vector field $\vb$ of grade 1 along the boundary $\partial\mathcal{V}^{2}$ of a bidimensional surface $\mathcal{V}^{2}$ relates its value to that of the surface integral of the curl of the vector field over the surface itself. In the notation used in the previous section, the surface integral is the flux of the curl of the vector field across the surface and this theorem reads
\begin{gather}\label{eq:stokes-vector}
	\int_{\partial\mathcal{V}^{2}}\drm\xb\cdot\vb = \int_{\mathcal{V}^{2}}\drm^{2}\xb^{\hodgeinv}\cdot(\nabla\times\vb).
\end{gather}
Taking into account the identity $\nabla\times\vb=(\nabla\wedge\vb)^{\hodgeinv}$ in~Equation \eqref{eq:link-curl-3d}, we  rewrite the right-hand side in~Equation \eqref{eq:stokes-vector} as
\begin{align}\label{eq:stokes-vector-2}
	\int_{\mathcal{V}^{2}}\drm^{2}\xb^{\hodgeinv}\cdot(\nabla\times\vb) &= \int_{\mathcal{V}^{2}}\drm^{2}\xb^{\hodgeinv}\cdot(\nabla\wedge\vb)^{\hodgeinv} \nonumber \\
	&= \int_{\mathcal{V}^{2}}\drm^{2}\xb\cdot(\nabla\wedge\vb),
\end{align}
where we used that $\ub\cdot\wb = \ub^{\hodgeinv}\cdot\wb^{\hodgeinv} = \ub^{\hodge}\cdot\wb^{\hodge}$ for vectors $\ub$, $\wb$. The \textit{flux} of the curl of the vector field \textit{across} a surface is also the \textit{circulation} of the exterior derivative of the vector field \textit{along} that surface.
 
The generalized Stokes theorem for differential forms \cite{Cartan} (p.~80) allows us to extend the Kelvin-Stokes theorem to multivectors of any grade $m$ as we do in the following theorem. 
\begin{theorem} \label{theo:circ}
The circulation of a grade-$m$ vector field $\vb$ along the boundary $\partial\mathcal{V}^{\ell}$ of an $\ell$-dimensional hypersurface $\mathcal{V}^{\ell}$ is equal to the circulation of the exterior derivative of $\vb$ along $\mathcal{V}^{\ell}$:
\begin{equation} \label{eq:circulation}
	\mathcal{C}(\vb,\partial\mathcal{V}^{\ell}) = \mathcal{C}(\deltabf\wedge \vb,\mathcal{V}^{\ell}).
\end{equation}
\end{theorem}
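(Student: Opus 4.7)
The plan is to reduce the multivector identity to the classical generalized Stokes theorem for scalar differential forms, applied componentwise. Both sides of \eqref{eq:circulation} are grade-$(\ell-m-1)$ multivectors, so it suffices to verify that for every ordered list $J$ with $|J|=\ell-m-1$ the scalar coefficients along $\ebf_J$ agree.

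First I would project the left-hand circulation onto $\ebf_J$. Writing $\vb=\sum_I v_I\ebf_I$ with $|I|=m$ and expanding $\drm^{\ell-1}\xb\rintprod\vb$ via the definition of the right interior product in \eqref{eq:right-int-prod}, a direct computation gives
\begin{equation*}
[\mathcal{C}(\vb,\partial\mathcal{V}^\ell)]_J = \sum_{\substack{|I|=m\\ I\cap J=\emptyset}}\Delta_{I,I}\,\sigma(I,J)\int_{\partial\mathcal{V}^\ell} v_I\,\drm x_{\varepsilon(I,J)}.
\end{equation*}
Interpreting each $\drm x_{\varepsilon(I,J)}$ as an oriented $(\ell-1)$-form on $\partial\mathcal{V}^\ell$, this coincides with $\int_{\partial\mathcal{V}^\ell}\omega_J$ for the classical differential form
\begin{equation*}
\omega_J = \sum_{\substack{|I|=m\\I\cap J=\emptyset}}\Delta_{I,I}\,v_I\,\drm x_I\wedge\drm x_J.
\end{equation*}

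Next I would perform the analogous expansion for $\mathcal{C}(\deltabf\wedge\vb,\mathcal{V}^\ell)$. Since $\deltabf\wedge\vb$ has grade $m+1$ with $\ebf_K$-component $\sum_{i\in K}\Delta_{i,i}\sigma(i,K\setminus i)\partial_i v_{K\setminus i}$, reindexing the inner sum by $I=K\setminus\{i\}$ and using $\Delta_{K,K}\Delta_{i,i}=\Delta_{I,I}$ together with $\Delta_{i,i}^2=1$ yields
\begin{equation*}
[\mathcal{C}(\deltabf\wedge\vb,\mathcal{V}^\ell)]_J = \sum_{\substack{|I|=m\\ i\notin I\cup J}}\Delta_{I,I}\,\sigma(i,I)\,\sigma(\varepsilon(i,I),J)\int_{\mathcal{V}^\ell}\partial_i v_I\,\drm x_{\varepsilon(i,I,J)}.
\end{equation*}
I would then compute the classical exterior derivative $d\omega_J=\sum_{I,i}\Delta_{I,I}\,\partial_iv_I\,\drm x_i\wedge\drm x_I\wedge\drm x_J$, and apply the sorting identity $\drm x_i\wedge\drm x_I\wedge\drm x_J=\sigma(i,I)\,\sigma(\varepsilon(i,I),J)\,\drm x_{\varepsilon(i,I,J)}$ to recognize that $\int_{\mathcal{V}^\ell}d\omega_J$ equals exactly the expression above.

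The classical generalized Stokes theorem applied to each scalar $(\ell-1)$-form $\omega_J$ on the oriented manifold $\mathcal{V}^\ell$ gives $\int_{\partial\mathcal{V}^\ell}\omega_J=\int_{\mathcal{V}^\ell}d\omega_J$; reassembling $\sum_J[\cdot]_J\,\ebf_J$ produces \eqref{eq:circulation}. The main obstacle is the sign bookkeeping: one must carefully track the three sources of signature factors, namely the $\sigma(J,\varepsilon(I^c,J)^c)$ from \eqref{eq:right-int-prod}, the $\sigma(i,I)$ from the exterior derivative, and the $\Delta_{i,i}$ carried by $\deltabf$, and confirm that the Minkowski factors cancel via $\Delta_{K,K}=\Delta_{I,I}\Delta_{i,i}$ when the summation over $(m+1)$-indices $K$ is rewritten as a summation over $(I,i)$. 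Once this sign algebra is verified, the multivector statement reduces cleanly to the scalar Stokes theorem.
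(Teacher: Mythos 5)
Your proposal is correct and follows essentially the same route as the paper's proof: both reduce the statement to the generalized Stokes theorem for differential forms and then match the integrands by a permutation-signature computation, with your wedge-sorting identity $\drm x_i\wedge\drm x_I\wedge\drm x_J=\sigma(i,I)\,\sigma(\varepsilon(i,I),J)\,\drm x_{\varepsilon(i,I,J)}$ being exactly the paper's key identity \eqref{eq:sigma-circ} in disguise. The only difference is organizational — you project onto each scalar component $\ebf_J$ and apply classical Stokes to scalar forms $\omega_J$, whereas the paper works directly with the multivector-valued form $\drm^{\ell-1}\xb\rintprod\vb$ — and your sign bookkeeping (including the cancellation $\Delta_{K,K}\Delta_{i,i}=\Delta_{I,I}$) checks out.
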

As hinted at above, the role of the vector curl in the right-hand side of~Equation \eqref{eq:stokes-vector} is played by the exterior derivative in this generalized theorem.

\begin{proof}
We start by stating the generalized Stokes Theorem for differential forms \cite{Cartan} (pp.~80)
\begin{equation}\label{eq:stokes-diff-forms}
 \int_{\partial\mathcal{V}^{\ell}} \omega = \int_{\mathcal{V}^{\ell}}\drm\omega ,
\end{equation}
where $\omega$ is a differential form and $\drm\omega$ its exterior derivative, represented by the operator
\begin{equation}\label{eq:ext-der-omega}
\drm = \sum _j \drm x_j \partial _j.
\end{equation} 

Expressing the circulations in~Equation \eqref{eq:circulation} by means of the integrals in~Equation \eqref{eq:def1}, we obtain
\begin{equation}\label{eq:circulation-2}
	\int_{\partial\mathcal{V}^{\ell}}\drm^{\ell-1}\xb\rintprod\vb = \int_{\mathcal{V}^{\ell}}\drm^{\ell}\xb\rintprod(\deltabf\wedge \vb). 
\end{equation}
In the integral in the left-hand side of~Equation \eqref{eq:circulation-2}, the integrand is a differential form $\omega = \drm^{\ell-1}\xb\rintprod\vb$. After expanding the interior product using the definitions of $\drm^{\ell-1}\xb$ and $\vb$ we obtain
\begin{align}
\omega = \Biggl( \sum _{J_{\ell-1}} \drm x_J \, \ebf_J \Biggr) \rintprod \Biggl( \sum _{I_m} v_I \, \ebf_I \Biggr)
= \sum _{J_{\ell-1}, I_m \,:\, I\subseteq J} \Delta_{I,I} \sigma(I, \varepsilon(I,J^c)^c) v_I \drm x_J \, \ebf_{\varepsilon(I,J^c)^c}.
\end{align}
Then, computing the exterior derivative of this form with~Equation \eqref{eq:ext-der-omega} gives
\begin{equation}\label{eq:ext_der_stokes}
\drm \omega =  \sum _{J_{\ell-1}, I_m \,:\, I\subseteq J} \sum _{j \not\in J} \Delta_{I,I} \sigma(I, \varepsilon(I,J^c)^c) \partial _j v_I \sigma(j,J) \drm x_{\varepsilon(j,J)} \, \ebf_{\varepsilon(I,J^c)^c} .
\end{equation}

We next write down the integrand in the right-hand side of~Equation \eqref{eq:circulation-2}, $\drm^{\ell}\xb\rintprod(\deltabf\wedge \vb)$, that is
\begin{align}
\drm^{\ell}\xb\rintprod(\deltabf\wedge \vb) &= \Biggl( \sum _{K_{\ell +1}} \drm x_K \, \ebf_K \Biggr) \rintprod \Biggl( \sum _{I_m} \sum_{j\not\in I} \Delta_{j,j} \partial_j v_I \, \sigma(j,I) \, \ebf_{\varepsilon(j,I)} \Biggr)  \nonumber \\
&= \sum _{K_{\ell},I_m \,:\, \varepsilon(j,I) \subseteq K_{\ell}} \sum_{j\not\in I} \Delta_{j,j} \partial_j v_I \drm x_K \, \sigma(j,I) \Delta_{\varepsilon(j,I),\varepsilon(j,I)} \, \sigma(\varepsilon(j,I), \varepsilon(K^c,\varepsilon(j,I))^c) \, \ebf_{\varepsilon(K^c,\varepsilon(j,I))^c} 	\nonumber \\
&= \sum _{K_{\ell},I_m \,:\, \varepsilon(j,I) \subseteq K_{\ell+1}} \sum_{j\not\in I} \Delta_{I,I} \partial_j v_I \drm x_K \, \sigma(j,I) \sigma(\varepsilon(j,I), \varepsilon(K^c,\varepsilon(j,I))^c) \, \ebf_{\varepsilon(K^c,\varepsilon(j,I))^c} ,\label{eq:31}
\end{align}
and verify that it coincides with exterior derivative in~Equation \eqref{eq:ext_der_stokes}. As the set of $m$ indices $I_m$ is included in the sets $J_{\ell-1}$ or $K_{\ell}$ in~Equation \eqref{eq:ext_der_stokes} or~\eqref{eq:31}, we may write $K_{\ell} = \varepsilon(J_{\ell-1},j)$ for some $j\notin J_{\ell-1}$. Then, we obtain the following chain of equalities for the basis elements in~Equations \eqref{eq:ext_der_stokes} and \eqref{eq:31}:
%
%
\begin{equation}
\ebf_{\varepsilon(K^c,\varepsilon(j,I))^c} = \ebf_{\varepsilon(K^c \cup \{j\} \cup I)^c} = \ebf_{\varepsilon(J^c \setminus \{j\} \cup \{j\} \cup I)^c} = \ebf_{\varepsilon(J^c \cup I)^c} = \ebf_{\varepsilon(I,J^c)^c}.
\end{equation} 
Therefore, and using that $\varepsilon(J^c \setminus \{j\},\varepsilon(j,I))^c = J\setminus I$, we can write~Equation \eqref{eq:31} as
\begin{align}
\drm^{\ell}\xb\rintprod(\deltabf\wedge \vb) &= \sum _{J_{\ell-1},I_m,j \not\in I \,:\, \varepsilon(j,I) \subseteq J\cup \{j\}} \Delta_{I,I} \partial_j v_I \drm x_{\varepsilon(J,j)} \, \sigma(j,I) \sigma(\varepsilon(j,I), J\setminus I) \, \ebf_{\varepsilon(I,J^c)^c}.\label{eq:35}
\end{align}

Comparing~Equation \eqref{eq:35} with~Equation \eqref{eq:ext_der_stokes}, the expressions coincide if this identity holds:
\begin{equation}
\label{eq:sigma-circ}
\sigma(j,J) \sigma(I, J  \setminus I) = \sigma(j,I) \sigma(j+ I, J \setminus I) .
\end{equation}
To prove~Equation \eqref{eq:sigma-circ} we exploit that the $\sigma$ are permutation signatures and that the signature of the composition of permutations is the product of the respective signatures. We proceed with the help of a visual aid in \figurename~\ref{fig:circ}, which depicts the identity between two different ways of sorting the concatenated list $(j,I,J\setminus I)$. On the left column we first sort the list $(I,J\setminus I)$ to obtain $J$ and then sort the list $(j,J)$. On the right column, we first sort the list $(j,I)$ and then the list $(j+I,J\setminus I)$.
This proves~Equation \eqref{eq:sigma-circ} and the theorem.
\begin{figure}[htb]
\centering
	\begin{subfigure}[t]{0.45\textwidth}
	\centering
\begin{tikzpicture}
\draw [blue, thick] (-3,4) node[black] {$\vert$} -- (1,4) node[black] {$\vert$};
\node[blue, above] at (-1,4) {$I$};
\draw [thick] (-3.5,4) node[black] {$\vert$} -- (-3,4) node[black] {$\vert$};
\node[above] at (-3.25,4) {$j$};
\draw [red, thick] (1,4) -- (3,4) node[black] {$\vert$};
\node[red, above] at (2,4) {$J\setminus I$};
\draw [blue, thick] (-3,3)node[black] {$\vert$} -- (3,3) node[black] {$\vert$};
\node[blue, above] at (0,3) {$J$};
\draw [thick] (-3.5,3) node[black] {$\vert$} -- (-3,3) node[black] {$\vert$};
\node[above] at (-3.25,3) {$j$};
\draw [thick] (-3.5,2)node[black] {$\vert$} -- (3,2) node[black] {$\vert$};
\node[above] at (-0.25,2) {$\varepsilon(j,J)$};
\end{tikzpicture}
\end{subfigure}
\centering
	\begin{subfigure}[t]{0.45\textwidth}
	\centering
\begin{tikzpicture}
\draw [blue, thick] (-3,4) node[black] {$\vert$} -- (1,4) node[black] {$\vert$};
\node[blue, above] at (-1,4) {$I$};
\draw [thick] (-3.5,4) node[black] {$\vert$} -- (-3,4) node[black] {$\vert$};
\node[above] at (-3.25,4) {$j$};
\draw [red, thick] (1,4) -- (3,4) node[black] {$\vert$};
\node[red, above] at (2,4) {$J\setminus I$};
\draw [blue, thick] (-3.5,3) node[black] {$\vert$} -- (1,3) node[black] {$\vert$};
\node[blue, above] at (-1.25,3) {$j+I$};
\draw [red, thick] (1,3) -- (3,3) node[black] {$\vert$};
\node[red, above] at (2,3) {$J\setminus I$};
\draw [thick] (-3.5,2)node[black] {$\vert$} -- (3,2) node[black] {$\vert$};
\node[above] at (-0.25,2) {$\varepsilon(j,J)$};
\end{tikzpicture}
\end{subfigure}
\caption{Visual aid for the identity among permutations in~Equation \eqref{eq:sigma-circ}.}
\label{fig:circ}
\end{figure}
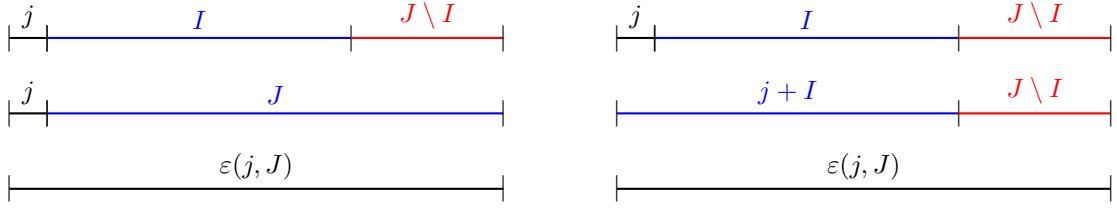

Finally, we note that, had we defined the circulation with the left interior product, we would have got an incompatible relation in~Equation \eqref{eq:sigma-circ}, which could not be solved. 
\end{proof}

\subsection{Stokes Theorem for the Flux}	

In vector calculus in $\Rb^3$, the Gauss theorem relates the volume integral of the divergence of a vector field $\vb$ over a region $\mathcal{V}^{3}$ to the surface integral of the vector field over the region boundary $\partial\mathcal{V}^{3}$. In the notation used in previous sections, and taking into account that both the surface integral and the volume integral can be expressed as fluxes for $\Rb^3$, this theorem reads
\begin{gather}\label{eq:gauss-vector}
	\int_{\partial\mathcal{V}^{3}}\drm^2\xb^{\hodgeinv}\cdot\vb = \int_{\mathcal{V}^{3}}\drm^{3}\xb^{\hodgeinv}(\nabla\cdot\vb).
\end{gather}
Making use of the identity $\nabla\cdot\vb=\nabla\lintprod\vb$, we can rewrite the right-hand side in~Equation \eqref{eq:gauss-vector} as
\begin{align}
	\int_{\mathcal{V}^{3}}\drm^{3}\xb^{\hodgeinv}(\nabla\cdot\vb) = \int_{\mathcal{V}^{3}}\drm^{3}\xb^{\hodgeinv}(\nabla\lintprod\vb).
\end{align}
In other words, the Gauss theorem relates the flux of the interior derivative of a vector field $\vb$ across a region $\mathcal{V}^{3}$ to the flux of the vector field itself across the region boundary $\partial\mathcal{V}^{3}$. 

The generalized Stokes theorem for differential forms allows us to extend the Gauss theorem to multivectors of any grade $m$ as we do in the following theorem. 
\begin{theorem} \label{theo:flux}
The flux of a grade-$m$ vector field $\vb$ across the boundary $\partial\mathcal{V}^{\ell}$ of an $\ell$-dimensional hypersurface $\mathcal{V}^{\ell}$ is equal to the flux of the interior derivative of $\vb$ across $\mathcal{V}^{\ell}$:
\begin{equation} \label{eq:flux}
	\mathcal{F}(\vb,\partial\mathcal{V}^{\ell}) = \mathcal{F}(\deltabf\lintprod \vb,\mathcal{V}^{\ell}).
\end{equation}
\end{theorem}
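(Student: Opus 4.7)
The plan is to mirror the strategy used for Theorem~\ref{theo:circ}: apply the generalized Stokes theorem for differential forms (Equation~\eqref{eq:stokes-diff-forms}) to a carefully chosen $(\ell-1)$-form, and then verify term-by-term in the canonical basis that the resulting expression coincides with the explicit expansion of the right-hand side of Equation~\eqref{eq:flux}.

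Using Definition~\eqref{eq:def2}, the identity to prove becomes
\begin{equation*}
\int_{\partial\mathcal{V}^\ell} \drm^{\ell-1}\xb^\hodgeinv \lintprod \vb = \int_{\mathcal{V}^\ell} \drm^\ell \xb^\hodgeinv \lintprod (\deltabf \lintprod \vb).
\end{equation*}
I would set $\omega = \drm^{\ell-1}\xb^\hodgeinv \lintprod \vb$ and invoke Stokes' theorem for forms to reduce the problem to the pointwise identity $d\omega = \drm^\ell \xb^\hodgeinv \lintprod (\deltabf \lintprod \vb)$. Expanding $\omega$ with Equations~\eqref{eq:hodge-inv-transf} and~\eqref{eq:left-int-prod} gives a sum over $(\ell-1)$-lists $J$ and $m$-lists $I$ with $I^c\subseteq J$, producing differentials $\drm x_J$, basis vectors $\ebf_{I\cap J}$, and coefficients $\sigma(J^c,J)\,\sigma(I\cap J,J^c)$ (the two factors of $\Delta_{J^c,J^c}$ cancel). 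Applying $d=\sum_j \drm x_j\partial_j$ introduces $\partial_j v_I$, the sorting signature $\sigma(j,J)$, and the constraint $j\notin J$. An analogous expansion of the right-hand side, starting from the interior-derivative formula for $\deltabf\lintprod\vb$, yields a sum over $\ell$-lists $K$, $m$-lists $I$, and indices $i\in I$ with $K^c\subseteq I\setminus i$, contributing basis vector $\ebf_{(I\setminus i)\cap K}$ and coefficient $\sigma(K^c,K)\,\sigma(I\setminus i, i)\,\sigma((I\setminus i)\cap K, K^c)$, with the $\Delta_{K^c,K^c}$ factors again cancelling.

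Next I would match the two expansions by identifying $K = \varepsilon(j,J)$ and $i = j$. Under this identification the differential monomials, derivatives, basis vectors, and summation conditions all agree, and the proof reduces to a single signature identity of the form
\begin{equation*}
\sigma(j,J)\,\sigma(J^c,J)\,\sigma(A,J^c) = \sigma(K^c,K)\,\sigma(I\setminus j, j)\,\sigma(A,K^c),
\end{equation*}
where $A = (I\cap K)\setminus\{j\}$. I would establish this identity by decomposing each signature with respect to the disjoint partition $\{A, I^c, \{j\}, K^c\}$ of the full index set, using the multiplicativity $\sigma(X,Y\cup Z) = \sigma(X,Y)\,\sigma(X,Z)$ for disjoint $Y,Z$ together with the swap rule $\sigma(X,Y) = (-1)^{|X||Y|}\sigma(Y,X)$; both sides should then collapse to a common value such as $(-1)^{|A||K^c|}\sigma(K^c,I^c)\,\sigma(A,j)$. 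A diagram analogous to Figure~\ref{fig:circ} could display the two different block-sortings of the same concatenated list and make the argument visually transparent.

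The main obstacle, as in Theorem~\ref{theo:circ}, is the combinatorial bookkeeping of permutation signatures, but the present case is more delicate: the Hodge inverse acts on both $\drm^{\ell-1}\xb$ and $\drm^\ell\xb$, so each interior product is indexed by \emph{complementary} sets, and each side of the resulting identity carries three nested sortings rather than two. The hard part will be verifying that the $\Delta_{J^c,J^c}$ and $\Delta_{K^c,K^c}$ metric factors from the Hodge inverses and from the left interior products cancel cleanly, leaving a pure signature identity amenable to the block-decomposition argument above.
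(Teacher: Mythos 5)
Your proposal is correct and follows essentially the same route as the paper: identify $\omega = \drm^{\ell-1}\xb^{\hodgeinv}\lintprod\vb$, apply the generalized Stokes theorem for forms, expand both sides in the canonical basis (your conditions $I^c\subseteq J$ and basis elements $\ebf_{I\cap J}$ are just the paper's $J^c\subseteq I$ and $\ebf_{I\setminus J^c}$ rewritten), match terms via $K=\varepsilon(j,J)$, and reduce everything to the same signature identity. The only difference is cosmetic but worth noting: where the paper verifies that identity by composing sorting permutations in two orders (its Figures~\ref{fig:flux1} and~\ref{fig:flux2}), you factor every signature over the disjoint partition $\{A, I^c, \{j\}, K^c\}$ using multiplicativity of inversion counts, and indeed both sides collapse to $(-1)^{\len{A}\len{K^c}}\sigma(K^c,I^c)\sigma(A,j)$ exactly as you predicted, so your variant of the final combinatorial step is sound.
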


\begin{proof}
Expressing the fluxes in~Equation \eqref{eq:flux} by means of the integrals in~Equation \eqref{eq:def2}, we obtain
\begin{equation}\label{eq:flux-2}
	\int_{\partial\mathcal{V}^{\ell}}\drm^{\ell-1}\xb^{\hodgeinv}\lintprod\vb = 
	\int_{\mathcal{V}^{\ell}}\drm^{\ell}\xb^{\hodgeinv}\lintprod(\deltabf\lintprod \vb).
\end{equation}

As in the proof of Theorem~\ref{theo:circ}, we apply the Stokes theorem for differential forms in~Equation \eqref{eq:stokes-diff-forms} upon the identifications  $\omega$ with $\drm^{\ell-1}\xb^{\hodgeinv}\lintprod\vb$ and $\drm \omega$ with $\drm^{\ell}\xb^{\hodgeinv}\lintprod(\deltabf\lintprod \vb)$. First, for $\omega$, we get
\begin{align}
\Biggl( \sum_{J_{\ell-1}}\drm x_J \Delta_{J^c, J^c} \sigma(J^c,J) \ebf_{J^c}  \Biggr) \lintprod \Biggl( \sum _{I_m} v_I \, \ebf_I \Biggr) &= \sum _{J_{\ell-1}, I_m \,:\, J^c \subseteq I} v_I \drm x_J \sigma (J^c, J)\sigma(\varepsilon(J^c, I^c)^c, J^c) \, \ebf_\varepsilon(J^c, I^c)^c	\nonumber \\
&= \sum _{J_{\ell-1}, I_m \, : \, J^c \subseteq I} v_I \drm x_J \sigma (J^c, J) \sigma(I\setminus J^c, J^c) \, \ebf_{I\setminus J^c}.
\label{eq:55}
\end{align}
Now, taking the exterior derivative of~Equation \eqref{eq:55}, we obtain
\begin{equation}\label{eq:56}
\drm \omega = \sum _{J_{\ell-1}, I_m \,:\, J^c \subseteq I } \sum_{j \not \in J} \partial_j v_I \drm x_{\varepsilon(j,J)} \sigma(j,J) \sigma (J^c, J) \sigma(I\setminus J^c, J^c) \, \ebf_{I\setminus J^c} .
\end{equation}
This quantity should be equal to $\drm^{\ell}\xb^{\hodgeinv}\lintprod(\deltabf\lintprod \vb)$ in the right-hand side of~Equation \eqref{eq:flux-2}, which we expand as
\begin{align}
\drm^{\ell}\xb^{\hodgeinv}\lintprod(\deltabf\lintprod \vb) &= \Biggl( \sum_{K_{\ell}} \drm x_K\Delta_{K^c, K^c}\sigma(K^c, K) \ebf_{K^c} \Biggr) \lintprod \Biggl( \sum_{I : j\in I}\partial_j v_I \sigma(I\setminus j,j) \ebf_{I\setminus j} \Biggr)	\nonumber \\
&=\sum _{K_{\ell}, I_m \,:\, K^c \subseteq I\setminus j} \sum_{j\in I} \partial_j v_I\drm x_K \sigma(K^c, K) \sigma(I\setminus j,j) \sigma(I\setminus j \setminus K^c, K^c) \, \ebf_{I\setminus j \setminus K^c} .\label{eq:58}
\end{align}

We first consider the sets in the summations in the alternative expressions for $\drm \omega$,~Equations \eqref{eq:56} and \eqref{eq:58}. Since $J^c$ contains $j$ and is a subset of $I$, but $K^c$ does not contain $j$ and is also a subset of $I$ (with $j\in I$), then we can assert that  $K = J \cup \{j\}$ so that the conditions in the summations are equivalent. The basis elements coincide and so do the differentials and derivatives, and it remains to verify the identity 
\begin{equation}
\sigma(j,J) \sigma(J^c, J)\sigma(I\setminus J^c, J^c) = \sigma(K^c, K) \sigma(I\setminus j,j) \sigma(I\setminus j \setminus K^c, K^c).
\end{equation}
With the definition $L = I\setminus J^c$, and expressed in terms of $j$, $J$, and $L$, this condition gives
\begin{equation}
\sigma(j,J) \sigma(J^c, J)\sigma(L, J^c) = \sigma(J^c \setminus j, J+ j) \sigma(J^c\setminus j+L,j)  \sigma(L, J^c \setminus j) .
\end{equation}
Multiplying both sides of the equation by $\sigma(J^c,J)$, $\sigma(J^c \setminus j, J+ j)$ and $\sigma(J^c\setminus j,j)$, and taking into account that the square of a signature is $+1$, we obtain
\begin{equation} \label{eq:perm-flux-proof}
\sigma(J^c\setminus j,j)\sigma(L, J^c)\sigma(j, J) \sigma(J^c \setminus j, J+ j) =   \sigma(L, J^c \setminus j) \sigma(J^c\setminus j+L,j) \sigma(J^c\setminus j,j) \sigma(J^c,J).
\end{equation}

We start by simplifying~Equation \eqref{eq:perm-flux-proof} by noting that
\begin{equation}
\sigma(J^c\setminus j,j)\sigma(L, J^c) = \sigma(L, J^c \setminus j) \sigma(J^c\setminus j+L,j),
\end{equation}
with help of the visual aid in~\figurename~\ref{fig:flux1}. The permutations on the left column first merge $(J^c \setminus j)$ with $j$ and then the resulting $J^c$ with L. Similarly, on the right column, we start with $L$, $(J^c \setminus j)$ and $\{j\}$, then concatenate $(L, J^c \setminus j) $ and then add $j$, getting the same result as the left column.
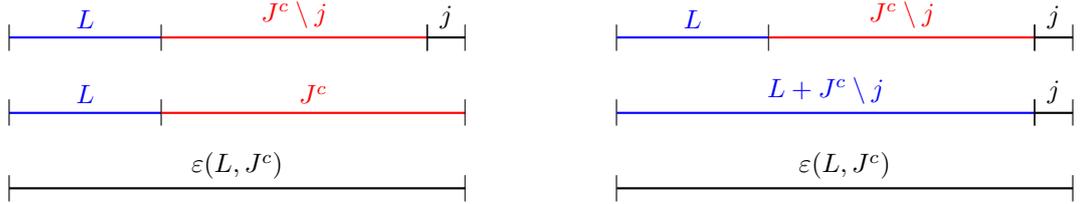
\begin{figure}[htb]
\centering
	\begin{subfigure}[t]{0.45\textwidth}
	\centering
\begin{tikzpicture}
\draw [blue, thick] (-3,4) node[black] {$\vert$} -- (-1,4) node[black] {$\vert$};
\node[blue, above] at (-2,4) {$L$};
\draw [red, thick] (-1,4) -- (2.5,4) node[black] {$\vert$};
\node[red, above] at (0.75,4) {$J^c\setminus j$};
\draw [thick] (2.5,4) node[black] {$\vert$} -- (3,4) node[black] {$\vert$};
\node[above] at (2.75,4) {$j$};
\draw [blue, thick] (-3,3) node[black] {$\vert$} -- (-1,3) node[black] {$\vert$};
\node[blue, above] at (-2,3) {$L$};
\draw [red, thick] (-1,3) -- (3,3) node[black] {$\vert$};
\node[red, above] at (1,3) {$J^c$};
\draw [thick] (-3,2)node[black] {$\vert$} -- (3,2) node[black] {$\vert$};
\node[above] at (0,2) {$\varepsilon(L,J^c)$};
\end{tikzpicture}
\end{subfigure}
\centering
	\begin{subfigure}[t]{0.45\textwidth}
	\centering
\begin{tikzpicture}
\draw [blue, thick] (-3,4) node[black] {$\vert$} -- (-1,4) node[black] {$\vert$};
\node[blue, above] at (-2,4) {$L$};
\draw [red, thick] (-1,4) -- (2.5,4) node[black] {$\vert$};
\node[red, above] at (0.75,4) {$J^c\setminus j$};
\draw [thick] (2.5,4) node[black] {$\vert$} -- (3,4) node[black] {$\vert$};
\node[above] at (2.75,4) {$j$};
\draw [blue, thick] (-3,3) node[black] {$\vert$} -- (2.5,3) node[black] {$\vert$};
\node[blue, above] at (-0.25,3) {$L+J^c\setminus j$};
\draw [thick] (2.5,3) node[black] {$\vert$} -- (3,3) node[black] {$\vert$};
\node[above] at (2.75,3) {$j$};
\draw [thick] (-3,2)node[black] {$\vert$} -- (3,2) node[black] {$\vert$};
\node[above] at (0,2) {$\varepsilon(L,J^c)$};
\end{tikzpicture}
\end{subfigure}
\caption{Visual aid for the identity $\sigma(J^c\setminus j,j)\sigma(L, J^c) = \sigma(L, J^c \setminus j) \sigma(J^c\setminus j+L,j)$.}
\label{fig:flux1}
\end{figure}

Therefore, we have reduced~Equation \eqref{eq:perm-flux-proof} to the simpler form
\begin{equation}
\sigma(j, J) \sigma(J^c \setminus j, J+ j) =    \sigma(J^c\setminus j,j) \sigma(J^c,J),
\end{equation}
which we prove with the aid depicted in~\figurename~\ref{fig:flux2}. On the left column, $j$ and $J$ are first merged and then the concatenation $(J^c \setminus j, J+ j)$ gives the sorted $\varepsilon(J^c, J)$. On the right column, after sorting $(J^c\setminus j)$ with $j$, merging it with $J$ leads to the same final sequence.
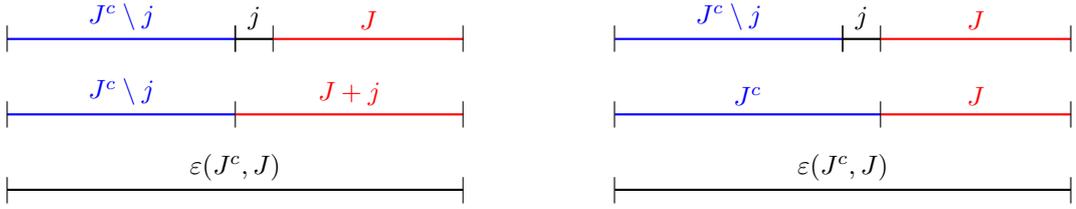
\begin{figure}[htb]
\centering
	\begin{subfigure}[t]{0.45\textwidth}
	\centering
\begin{tikzpicture}
\draw [blue, thick] (-3,4) node[black] {$\vert$} -- (0,4) node[black] {$\vert$};
\node[blue, above] at (-1.5,4) {$J^c\setminus j$};
\draw [thick] (0,4) node[black] {$\vert$} -- (0.5,4) node[black] {$\vert$};
\node[above] at (0.25,4) {$j$};
\draw [red, thick] (0.5,4) -- (3,4) node[black] {$\vert$};
\node[red, above] at (1.75,4) {$J$};
\draw [blue, thick] (-3,3) node[black] {$\vert$} -- (0,3) node[black] {$\vert$};
\node[blue, above] at (-1.5,3) {$J^c\setminus j$};
\draw [red, thick] (0,3) -- (3,3) node[black] {$\vert$};
\node[red, above] at (1.5,3) {$J+j$};
\draw [thick] (-3,2)node[black] {$\vert$} -- (3,2) node[black] {$\vert$};
\node[above] at (0,2) {$\varepsilon(J^c,J)$};
\end{tikzpicture}
\end{subfigure}
\centering
	\begin{subfigure}[t]{0.45\textwidth}
	\centering
\begin{tikzpicture}
\draw [blue, thick] (-3,4) node[black] {$\vert$} -- (0,4) node[black] {$\vert$};
\node[blue, above] at (-1.5,4) {$J^c\setminus j$};
\draw [thick] (0,4) node[black] {$\vert$} -- (0.5,4) node[black] {$\vert$};
\node[above] at (0.25,4) {$j$};
\draw [red, thick] (0.5,4) -- (3,4) node[black] {$\vert$};
\node[red, above] at (1.75,4) {$J$};
\draw [blue, thick] (-3,3) node[black] {$\vert$} -- (0.5,3) node[black] {$\vert$};
\node[blue, above] at (-1.25,3) {$J^c$};
\draw [red, thick] (0.5,3) -- (3,3) node[black] {$\vert$};
\node[red, above] at (1.75,3) {$J$};
\draw [thick] (-3,2)node[black] {$\vert$} -- (3,2) node[black] {$\vert$};
\node[above] at (0,2) {$\varepsilon(J^c,J)$};
\end{tikzpicture}
\end{subfigure}
\caption{Visual aid for the identity $\sigma(j, J) \sigma(J^c \setminus j, J+ j) =    \sigma(J^c\setminus j,j) \sigma(J^c,J)$.}
\label{fig:flux2}
\end{figure}
\end{proof}


\section{An Application to Electromagnetism in 1+3 Dimensions} \label{sec-maxwell}
In this section, we show how to recover the standard form of Maxwell equations and Lorentz force in $1+3$ dimensions from a formulation with \textit{exterior calculus} involving an electromagnetic bivector field $\Fb$ and a 4-dimensional current density vector $\Jb$. In the appropriate units, the bivector field
$\Fb$ can be decomposed as $\Fb = \Fb _\Eb + \Fb _\Bb $, where $\Fb _\Eb$
contains the electric-field $\Eb$ time-space components and $\Fb _\Bb$ contains the space-space components for the magnetic field $\Bb$. Similarly, the current density depends on the charge density $\rho$ and the spatial current density $\jb$. More specifically, 
\begin{gather} 
\Jb=\rho\ebf_0 + \jb \\
\Fb = \Fb _\Eb + \Fb _\Bb = \ebf_0 \wedge \Eb + \Bb ^\hodge. \label{eq:split-F}
\end{gather}
Here the Hodge complement acts only on the space components, and $\Bb ^\hodge = \Bb ^\hodgeinv$. The bivector field $\Fb$ is closely related to the Faraday tensor, a rank-2 antisymmetric tensor.

Maxwell equations, in their differential form, constrain the divergence of the electric and the magnetic field,~Equations \eqref{eq:diff-max-1} and~\eqref{eq:diff-max-2}, respectively, and the curl of $\Eb$ and $\Bb$, namely~Equations \eqref{eq:diff-max-3} and \eqref{eq:diff-max-4} \cite{Feynman} (p.~4-1).
\begin{gather}
\nabla \cdot \Eb = \rho  \label{eq:diff-max-1}\\
\nabla \cdot \Bb = 0 	\label{eq:diff-max-2} \\
\nabla\times\Eb = -\partial_0 \Bb 	\label{eq:diff-max-3}	\\
\nabla \times \Bb = \partial_0 \Eb  + \jb.  \label{eq:diff-max-4}
\end{gather}
We refer to~Equations \eqref{eq:diff-max-2} and \eqref{eq:diff-max-3} as homogeneous Maxwell equations and to~Equations \eqref{eq:diff-max-1} and \eqref{eq:diff-max-4} as inhomogeneous Maxwell equations, as they include the fields and the sources given by charge and current densities. In exterior-calculus notation, both pairs of equations can be combined into simple multivector equations, 
\begin{align}
\deltabf \wedge \Fb &= 0 \label{eq:max-diff-homo} \\
\deltabf \lintprod \Fb &= \Jb, \label{eq:max-diff-nonhomo}
\end{align}
where $\deltabf$ is the differential operator $\deltabf =  -\partial_0 \ebf_0 + \nabla$ for $k=1$ and $n=3$. As a consistency check, note that the wedge product raises the grade of $\Fb$, and the zero in~Equation \eqref{eq:max-diff-homo} is the zero trivector; also, as the left interior product lowers the grade of $\Fb$, both sides of~Equation \eqref{eq:max-diff-nonhomo} relate space--time vectors.

Next to Maxwell equations, the Lorentz force density $\fbit$ characterizes, after integrating over the appropriate region, the force exerted by the electromagnetic field upon a system of charges described by the charge and current densities $\rho$ and $\jb$ \cite{Feynman} (pp.~13-1--13-3),
\begin{equation}
\fbit = \rho \Eb  + \jb\times\Bb.
\end{equation}
In relativistic form, the Lorentz force density becomes a four-dimensional vector $\fb$ \cite{Minkowski} (pp.~153--157). 
The time component of this vector is $\jb\cdot \Eb$, the power dissipated per unit of volume, or after integrating over the appropriate region,
the rate of work being done on the charges by the fields. 
In exterior-calculus notation, the Lorentz force density vector can be computed as a left interior product, namely
\begin{equation}\label{eq:lorentz-force}
 \fb = \Jb \lintprod \Fb.
 \end{equation}
 

\subsection{Equivalence of the Lorentz Force Density}

In this section, we prove that~Equation \eqref{eq:lorentz-force} indeed recovers the relativistic Lorentz force density by verifying that its components in both vector-calculus and exterior-calculus coincide.
From the definitions of $\Jb$ and $\Fb$, and using the distributive property of the interior product, we get
\begin{align}
 \fb &= (\rho\ebf_0 + \jb) \lintprod (\Fb _\Eb + \Fb _\Bb) \nonumber \\
 &= \rho\ebf_0\lintprod \Fb _\Eb + \rho\ebf_0\lintprod \Fb _\Bb + \jb\lintprod \Fb _\Eb + \jb\lintprod\Fb _\Bb \nonumber  \\
 &= \rho\ebf_0\lintprod (\ebf_0 \wedge \Eb) + \rho\ebf_0\lintprod \Bb ^\hodge + \jb\lintprod (\ebf_0 \wedge \Eb) + \jb\lintprod\Bb ^\hodgeinv. 
\end{align}
Some straightforward calculations give $\ebf_0\lintprod (\ebf_0 \wedge \Eb) = \Eb$, $\ebf_0\lintprod \Bb ^\hodge = 0$, and $\jb\lintprod (\ebf_0 \wedge \Eb) = \ebf_0\jb\cdot \Eb$. In addition, the formula for the left interior product in~Equation \eqref{eq:cross-prod-3d} gives $\jb\lintprod\Bb ^\hodgeinv = (\jb\wedge\Bb)^\hodgeinv = \jb\times\Bb$, where the cross product is only valid for three dimensions. With these calculations, we obtain
\begin{align}
 \fb &= \rho \Eb  + \ebf_0\jb\cdot \Eb + \jb\times\Bb,
\end{align}
namely, a time-component $\jb\cdot \Eb$ and a spatial component equal to the Lorentz force density $ \rho \Eb  + \jb\times\Bb$. 

\subsection{Equivalence of the Differential Form of Maxwell Equations}

In this section, we prove that~Equation \eqref{eq:max-diff-homo} indeed recovers the homogeneous Maxwell equations and that~Equation \eqref{eq:max-diff-nonhomo} recovers the inhomogeneous Maxwell equations.

First, we observe that the exterior derivative $\deltabf \wedge \Fb$ gives a trivector with 4 components, while the homogeneous Maxwell equations are a scalar, Equation \eqref{eq:diff-max-2}, and a vector, Equation \eqref{eq:diff-max-3}. We shall verify that the scalar equation turns out to be given by the trivector component $\ebf_{123}$ of $\deltabf \wedge \Fb$, while the vector equation is given by the trivector components $\ebf_{012}$, $\ebf_{013}$, and $\ebf_{023}$ of the exterior derivative.

We evaluate the exterior derivative $\deltabf \wedge \Fb$ using the decomposition of $\Fb$ in~Equation \eqref{eq:split-F}, 
\begin{align}
\deltabf \wedge \Fb &= -\partial_0 \ebf_0 \wedge\ebf_0 \wedge \Eb  -\partial_0 \ebf_0 \wedge \Bb ^\hodge + \nabla\wedge\ebf_0\wedge\Eb + \nabla \wedge \Bb^\hodge \nonumber \\
&= -\partial_0 \ebf_0 \wedge \Bb ^\hodge - \ebf_0\wedge(\nabla\wedge\Eb) + \nabla \wedge \Bb^\hodge\label{eq:54}	\nonumber \\
&= -\ebf_0 \wedge (\partial_0 \Bb ^\hodge + \nabla\wedge\Eb) + \nabla \wedge \Bb^\hodge,
\end{align}
where we used that $\ebf_0\wedge\ebf_0 = 0$ and that $\nabla\wedge\ebf_0 = -\ebf_0\wedge\nabla$ in the second step of~Equation \eqref{eq:54}.
Taking advantage of~Equation \eqref{eq:link-curl-3d} we have the equality $\nabla\wedge\Eb = (\nabla\times\Eb)^\hodge$, while $\nabla \wedge \Bb^\hodge = (\nabla \cdot \Bb )^\hodge$, and
\begin{align}
\deltabf \wedge \Fb &= -\ebf_0 \wedge (\partial_0 \Bb ^\hodge + (\nabla\times\Eb)^\hodge) + (\nabla \cdot \Bb )^\hodge.
\end{align}

Indeed, the first summand vanishes when $\partial_0 \Bb ^\hodge + (\nabla\times\Eb)^\hodge = 0$ or, taking the inverse Hodge complement, when~Equation \eqref{eq:diff-max-3} holds. In terms of components, the spatial Hodge complement in this equation transforms a spatial vector into a bivector with components $\ebf_{12}$, $\ebf_{13}$, and $\ebf_{23}$ only and this equation recovers the homogeneous Maxwell equation in~Equation \eqref{eq:diff-max-3}. After taking the exterior product with $\ebf_0$, we obtain the trivector components $\ebf_{012}$, $\ebf_{013}$, and $\ebf_{023}$.
Similarly, the second term vanishes for $(\nabla \cdot \Bb )^\hodge = 0$, recovering~Equation \eqref{eq:diff-max-2}. In terms of components, the spatial Hodge complement directly transforms a scalar into a trivector with a unique component $\ebf_{123}$, recovering the homogeneous Maxwell equation in~Equation \eqref{eq:diff-max-2}.


We move on to the inhomogeneous Maxwell equations. We compute the interior derivative $\deltabf \lintprod \Fb$,
\begin{align}
\deltabf \lintprod \Fb &= (-\partial_0 \ebf_0 + \nabla ) \lintprod (\Fb _\Eb + \Fb _\Bb) 	\nonumber \\
&= -\partial_0 \Eb -\partial_0 \ebf_0 \lintprod \Bb^\hodge + \ebf_0 \nabla \cdot \Eb + \nabla \lintprod \Bb ^\hodge  \nonumber \\
&= -\partial_0 \Eb + \ebf_0 \nabla \cdot \Eb + \nabla \lintprod \Bb ^\hodge,
\end{align}
since $\ebf_0 \lintprod \Bb^\hodge = 0$. 
The interior derivative $\deltabf \lintprod \Fb$ gives a space--time vector with 4 components, while the inhomogeneous Maxwell equations are a scalar, Equation \eqref{eq:diff-max-1}, and a spatial vector, Equation \eqref{eq:diff-max-4}. 

We can verify that the scalar equation turns out to be given by the vector component $\ebf_{0}$ of $\deltabf \lintprod \Fb$, while the spatial vector equation is given by the spatial vector components $\ebf_{1}$, $\ebf_{2}$, and $\ebf_{3}$ of $\deltabf \lintprod \Fb$. Indeed, if we match this expression with the current density vector $\Jb$, then the time component $\ebf_0$ of $\deltabf \lintprod \Fb$ gives~Equation \eqref{eq:diff-max-1}. 
Selecting the space components of $\deltabf \lintprod \Fb$, the  differential equation is
\begin{equation}
-\partial_0 \Eb  + \nabla \lintprod \Bb ^\hodge = \jb ,
\end{equation}
which, using the relation $\nabla \lintprod \Bb ^\hodge = \nabla \times \Bb$ can be written as~Equation \eqref{eq:diff-max-4}.

\subsection{Equivalence of the Integral Form of Maxwell Equations}
After studying the exterior-calculus differential formulation of Maxwell equations, we  recover the standard  integral formulation. Applying the Stokes Theorem~\ref{theo:circ} to~Equation \eqref{eq:max-diff-homo}, we find that \textit{the circulation of the bivector field $\Fb$ along the boundary of any three-dimensional space--time volume ${\mathcal V}^3$ is zero}:
\begin{equation}\label{eq:86}
\int_{\partial\mathcal{V}^{3}}\drm^{2}\xb\cdot\Fb = \int_{\mathcal{V}^{3}}\drm^{3}\xb\cdot(\deltabf\wedge \Fb) = 0.
\end{equation}
At this point,~Equation \eqref{eq:86} is a scalar equation and we obtain the pair of homogeneous Maxwell equations by considering two different hypersurfaces $\mathcal{V}^{3}$.

First, let the domain ${\mathcal V}^3=V$ contain only spatial coordinates. There are no tangential components to $V$ with time indices and the contribution of $\Fb_\Eb$ to the circulation of $\Fb$ over $\partial\mathcal{V}^{3}$ in~Equation \eqref{eq:86} is zero, i.e.,
\begin{align}\label{eq:fB0}
\int_{\partial V}\drm^{2}\xb\cdot\Fb &= \int_{\partial V}\drm^{2}\xb\cdot\Fb_\Bb.
\end{align}
Using that $\ub\cdot\wb = \ub^{\hodgeinv}\cdot\wb^{\hodgeinv}$ for any vectors $\ub$, $\wb$, and therefore $\drm^{2}\xb\cdot\Fb = \drm^{2}\xb^{\hodgeinv}\cdot\Fb_\Bb^{\hodgeinv}$ and the definition $\Fb_\Bb = \Bb^\hodge$, the integral in the right-hand side of~Equation \eqref{eq:fB0} becomes
\begin{align}
\int_{\partial V}\drm^{2}\xb\cdot\Fb_\Bb &= \int_{\partial V}\drm^{2}\xb^{\hodgeinv}\cdot\Bb 
\nonumber \\
 &= \int_{\partial V}\drm\Sb\cdot\Bb, \label{eq:fB2}
\end{align}
where we used~Equation \eqref{eq:24} to write the last surface integral. Substituting~Equation \eqref{eq:fB2} back into~Equation \eqref{eq:86} gives the Gauss law for the magnetic field \cite{Feynman} (pp.~1-5--1-9).

Let now $\mathcal{V}^{3}$ be a time-space domain $(t_0,t_1)\times S$, where $S$ is a two-dimensional spatial surface. With no real loss of generality we assume that $S$ lies on the $\ebf_1\wedge\ebf_2$ plane.  Its boundary $\partial\mathcal{V}^{3}$ is the union of the sets $(t_0,t_1)\times \partial S$, $t_0\times S$ and $t_1\times S$. For the first set, we choose $\ebf_\perp$ as the vector normal to $\partial S$ pointing outwards on the plane defined by $S$ and $\ebf_\myparallel=\ebf_0\wedge \ebf_{\partial S}$, where $\ebf_{\partial S}$ is a vector tangent to $\partial S$ with a counterclockwise orientation, so that $\ebf_\perp\wedge\ebf_\myparallel=-\ebf_{012}$. Further, since $\ebf_\myparallel$ is a time-space bivector, the contribution of $\Fb_\Bb$ to the circulation of $\Fb$ over this first set in~Equation \eqref{eq:86} is zero, and
\begin{equation}
	\int_{(t_0,t_1)\times \partial S}\drm^{2}\xb\cdot\Fb = -\int_{(t_0,t_1)\times \partial S} \drm^2 \xb_\myparallel \cdot \Fb_\Eb.
	\label{eq:90} 
\end{equation}
Writing the differential vector as $\drm^2 \xb_\myparallel=\drm t \drm x \ebf_{0x}$, parameterizing the line integral over the boundary $\partial S$ by the variable $x$ with unit vector $\ebf_{0x}$, and using that $\ebf_{0x}\cdot \Fb_\Eb  = -\ebf_{x}\cdot \Eb$ and therefore $\drm x\,\ebf_{0x}\cdot \Fb_\Eb  = -\drm \xb\cdot \Eb$, the integral of the right-hand side of~Equation \eqref{eq:90} becomes
\begin{equation}
	\int_{t_0}^{t_1} \drm t\int_{\partial S} \drm \xb\cdot \Eb.
	\label{eq:91}
\end{equation}

For the second and third sets the normal vector to the integration surface pointing outwards are $\ebf_\perp=-\ebf_0$ and $\ebf_\perp=\ebf_0$ respectively. Since $\ebf_\myparallel$ is a space-space bivector in both cases, then the contribution of $\Fb_\Eb$ to the circulation is zero. We express the circulations of $\Fb_\Bb$ as fluxes of $\Bb$ and surface integrals as done in~Equation \eqref{eq:fB2}.
Using these observations the integral for the circulation of $\Fb$ over these two sets in~Equation \eqref{eq:86} is given by
\begin{align}
\int_{t_0\times S}\drm^{2}\xb\cdot\Fb + \int_{t_1\times S}\drm^{2}\xb\cdot\Fb &=  - \int_{S} \drm\Sb \cdot\Bb(t_0) + \int_{S} \drm\Sb \cdot\Bb(t_1).
\label{eq:92}
\end{align}

Combining~Equations \eqref{eq:91} and \eqref{eq:92} in~Equation \eqref{eq:86} we recover the integral over time of the so called Faraday law \cite{Feynman} (pp.~17-1--17-2). Equivalently, taking the time derivative recovers the usual Faraday law, namely 
\begin{align}
\int_{\partial S} \drm \xb\cdot \Eb + \partial_t\int_{S} \drm\Sb\cdot\Bb = 0.
\end{align}

In regard to the inhomogeneous Maxwell equations, applying the Stokes Theorem~\ref{theo:flux} to~Equation \eqref{eq:max-diff-nonhomo}, we find that {\em the flux of the bivector field $\Fb$ across 
the boundary of any three-dimensional space--time volume is equal to the flux of the current density $\Jb$ across the three-dimensional space--time volume}:
\begin{equation}\label{eq:93}
\int_{\partial\mathcal{V}^{3}}\drm^{2}\xb^{\hodgeinv}\cdot\Fb = 
	\int_{\mathcal{V}^{3}}\drm^{3}\xb^{\hodgeinv}\cdot(\deltabf\lintprod \Fb) = \int_{\mathcal{V}^{3}}\drm^{3}\xb^{\hodgeinv}\cdot\Jb .
\end{equation}
As with the homogeneous Maxwell equations, the scalar Equation \eqref{eq:93} yields the inhomogeneous Maxwell equations by considering two different hypersurfaces $\mathcal{V}^{3}$.

First, let the integration domain $\mathcal{V}^{3}$ be a spatial volume $V$. Since there are no normal components to $V$ with space indices only, the contribution of $\Fb_\Bb$ to the flux is zero so that~Equation \eqref{eq:93} becomes
\begin{align}\label{eq:78}
\int_{\partial V}\drm^{2}\xb^{\hodgeinv}\cdot\Fb_{\Eb} &= \int_{V}\drm^{3}\xb^{\hodgeinv}\cdot\Jb.
\end{align}
From the definition of inverse Hodge complement in~Equation \eqref{eq:hodge-inv-transf}, we write the differential vectors
\begin{gather}
\drm^{2}\xb^{\hodgeinv} = - \sum_{I, i\notin I}\drm^{2}x_{I} \sigma(0i,I)\ebf_{0i}
\label{eq:96}\\
\drm^{3}\xb^{\hodgeinv} = - \drm V  \ebf_{0}.
\end{gather}
Plugging these expressions in~Equation \eqref{eq:78}, using the definitions of $\Fb_\Eb$ and $\Jb$, and computing the dot products on both sides of the equality, we obtain that~Equation \eqref{eq:78} simplifies as  
\begin{gather}
\label{eq:98a}
-\int_{\partial V}\Biggl(\sum_{I, i\notin I}\drm^{2}x_{I} \sigma(0i,I)\ebf_{0i}\Biggr)\cdot \Biggl(\sum_{j}E_j\ebf_{0j}\Biggr) = -\int_{V}\drm V  \ebf_{0} \cdot (\rho\ebf_0+\jb)\\
\int_{\partial V}\sum_{I, i\notin I}\drm^{2}x_{I} \sigma(i,I)E_i = \int_{V}\drm V \rho\label{eq:82}\\
\int_{\partial V}\drm\Sb\cdot \Eb = \int_{V}\drm V\rho .\label{eq:83}
\end{gather}	
In~Equation \eqref{eq:82} we used that $\sigma(0i,I) = \sigma(i,I)$ and in~Equation \eqref{eq:83} we used that $\sum_{I, i\notin I}\drm^{2}x_{I} \sigma(i,I)E_i = \drm^{2}\xb^{\hodgeinv}\cdot\Eb$. Since the Hodge complement is over space, the result is a surface integral with positive orientation as in~Equation \eqref{eq:24}. We recovered in~Equation \eqref{eq:83} the Gauss law for the electric field \cite{Feynman} (pp.~4-7--4-9).

For $\mathcal{V}^{3} = (t_0,t_1)\times S$ where $S$ is a two-dimensional surface lying on the $\ebf_1\wedge\ebf_2$ plane, the boundary $\partial\mathcal{V}^{3}$ is the union of the sets $(t_0,t_1)\times \partial S$, $t_0\times S$ and $t_1\times S$. For the first set, since $\drm^{2}\xb^{\hodgeinv}$ has no time components, the contribution of $\Fb_\Eb$ to this set is zero, that is
\begin{equation}
	\int_{(t_0,t_1)\times \partial S}\drm^{2}\xb^{\hodgeinv}\cdot\Fb = \int_{(t_0,t_1)\times \partial S} \drm^{2}\xb^{\hodgeinv} \cdot \Fb_\Bb.
	\label{eq:101} 
\end{equation}
As in the homogeneous case, we choose $\ebf_\perp$ as the vector normal to $\partial S$ pointing outwards on the plane defined by $S$ and $\ebf_\myparallel=\ebf_0\wedge\ebf_{\partial S}$, where $\ebf_{\partial S}$ is a vector tangent to $\partial S$ with a counterclockwise orientation, such that $\ebf_\perp\wedge\ebf_\myparallel=-\ebf_{012}$ introduces a change of sign. Expressing $\drm^{2}\xb_\myparallel^{\hodgeinv}$ and $\Fb_\Bb$ in the canonical basis, defining $I=(0,i)$ so that $I^c$ contains only space indexes, and using that $\ebf_{I^c}\cdot\ebf_{i^c} = 1$ and $\sigma(I^c,I) = \sigma(I^c,0,i) = \sigma(0,I^c,i) = \sigma(I^c,i)$, we obtain that Equation \eqref{eq:101} simplifies to
\begin{align}
\int_{(t_0,t_1)\times \partial S}\drm^{2}\xb^{\hodgeinv}\cdot\Fb &= \int_{(t_0,t_1)\times \partial S}\left(\sum_I \drm x_I\sigma(I^c,I)\ebf_{I^c}\right)\cdot \left(\sum_{i}B_i\ebf_{i^c}\sigma(i^c,i)  \right)  \nonumber \\
&= -\int_{t_0}^{t_1}\drm t\int_{\partial S}\drm x B_x  \nonumber \\
&= -\int_{t_0}^{t_1}\drm t\int_{\partial S}\drm \xb \cdot \Bb.\label{eq:88}
\end{align}

For the second and third sets we respectively choose $\ebf_\perp=-\ebf_0$ and $\ebf_\perp=\ebf_0$ pointing outside ${\mathcal V}^3$, implying that the contribution of $\Fb_\Bb$ is zero for this set as the inverse Hodge complement of $\ebf_\perp$ is a space vector. Expressing $\drm^{2}\xb^{\hodgeinv}$ in Equation \eqref{eq:96} and using similar steps as in Equations \eqref{eq:98a}--\eqref{eq:83}, the left-hand side of Equation \eqref{eq:93} over these two sets is given by
\begin{equation}
	\int_{t_0 \times \partial S}\drm^{2}\xb^{\hodgeinv}\cdot\Fb +\int_{t_1 \times \partial S}\drm^{2}\xb^{\hodgeinv}\cdot\Fb   = - \int_{S}\drm\Sb\cdot \Eb(t_0) +  \int_{S}\drm\Sb\cdot \Eb(t_1)
	\label{eq:max-part1}
\end{equation}

Finally, for the right-hand side of~Equation \eqref{eq:93}, we choose $\ebf_\perp$ as the vector normal to ${\mathcal V}^3$ pointing outside. Since  $\ebf_\myparallel=\ebf_{012}$ implies that $\ebf_\perp\wedge\ebf_\myparallel=-\ebf_{0123}$, we obtain that
\begin{align}
\int_{{\mathcal V}^3}\drm^{3}\xb^{\hodgeinv}\cdot\Jb = -\int_{t_0}^{t_1}\drm t\int_{S}\drm\Sb\cdot\jb. \label{eq:max-part3}
\end{align}

We have thusly recovered the integral form of the Ampere-Maxwell equation \cite{Feynman} (p.~18-1--18-4) integrated over the time interval $(t_0,t_1)$ by combining~Equations \eqref{eq:88}--\eqref{eq:max-part3} into Equation \eqref{eq:93}, that is
\begin{align}
\int_{t_0}^{t_1}\drm t\int_{\partial S}\drm \xb \cdot \Bb &= \int_{t_0}^{t_1}\drm t\int_{S}\drm\Sb\cdot\jb +  \int_{S}\drm\Sb\cdot \Eb(t_1)- \int_{S}\drm\Sb\cdot \Eb(t_0) .
\end{align}

\section{Summary}
In this paper, we aimed at showing how exterior calculus provides a tool merging the simplicity and intuitiveness of standard vector calculus with the power of tensors and differential forms. Set in the context of a general space--time algebra with multiple space and time components, we provided the basic concepts of exterior algebra and calculus, such as multivectors, wedge product and interior products, with a distinction between left and right products, Hodge complement, and exterior and interior derivatives. While a space--time with multiple time coordinates leads to several issues from the physical point of view  \cite{Velev}, we did not deal with these problems as this paper focuses on the mathematical constructions.
We also defined oriented integrals, with two important examples being the flux and circulation of grade $m$-vector fields as integrals of the normal and tangent components of the field to a hypersurface respectively. These operations extend the standard circulation of a vector field as a line integral and the flux of a vector field as a surface integral in three dimensions to any number of dimensions and any vector grade.

Armed with the theory of differential forms, we proved two exterior-calculus Stokes theorems, one for the circulation and one for the flux, that generalize the Kelvin-Stokes, Gauss and Green theorems. We saw that the flux of the curl of a vector field in three dimensions across a surface is also the circulation of the exterior derivative of the vector field along that surface. In exterior calculus, these Stokes theorems hold for any number of dimensions and any vector grade and are simply expressed in terms of the exterior and interior derivatives for the circulation and flux respectively. 

As an application of our tools, we showed how to recover the classical laws of electromagnetism, Maxwell equations and Lorentz force, from a exterior-calculus formalism in relativistic space--time with one temporal and three spatial dimensions. The electromagnetic field is described by a bivector field with six components, closely related to Faraday's antisymmetric tensor, containing both electric and magnetic fields. The differential form of Maxwell equations relates the exterior derivative of the bivector field with the zero trivector and the interior derivative of the field with the current density vector. In the integral form, these equations correspond to the statements that the circulation of the bivector field along the boundary of any three-dimensional space--time volume is zero, and that the flux of the bivector field across the boundary of any three-dimensional space--time volume is equal to the flux of the current density across the same space--time volume. 



\appendix
\renewcommand\thesection{\Alph{section}}

\section{Proofs of product identities}	\label{sec-app-A}
In this appendix, we verify the relations about interior products introduced in Section~\ref{sec-EC-basics}.
\subsection{Relation between left and right interior products} \label{sec-app-left-right}
We now prove the formula
\begin{equation} \label{eq:left-right-int-prod}
\ebf_I \lintprod \ebf_J = \ebf_J \rintprod \ebf_I (-1)^{\len{I}(\len{J}-\len{I})} ,
\end{equation}
relating left and right interior products.
For two lists $I$ and $J$, we have 
\begin{gather}
\ebf_I \lintprod \ebf_J = \Delta_{I,I}\sigma\bigl(J\setminus I, I\bigr)\ebf_{J \setminus I} , \\
\ebf_J \rintprod \ebf_I = \Delta_{I,I}\sigma\bigl(I, J\setminus I \bigr)\ebf_{J \setminus I} ,
\end{gather}
where we assumed that $I\subseteq J$ with no loss of generality and used that $\varepsilon(I,J^c)^c = J\setminus I$ in this case.
The only difference between the expressions lies in the signatures, that are related by setting $A=J\setminus I$ and $B=I$ in the following lemma.

\begin{lemma} \label{lemma:reverse}
Given two arbitrary lists $A$ and $B$, of length $\len{A}$ and $\len{B}$ respectively, then the permutations sorting the concatenated lists $(A,B)$ and $(B,A)$ satisfy the formula
\begin{equation}
\label{eq:sigma-reverse-binom}
\sigma(A,B) = \sigma(B,A) (-1)^{\len{A}\len{B}}.
\end{equation}
\end{lemma}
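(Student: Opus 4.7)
The plan is to interpret both sides combinatorially and reduce the statement to counting adjacent transpositions needed to interchange two blocks.

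First, I would recall the meaning of the symbols: $\sigma(A,B)$ is the signature of the permutation that sorts the concatenated list $(A,B)$ into increasing order (and is zero if there are repeated indices). If $A$ and $B$ share any element then both sides of~\eqref{eq:sigma-reverse-binom} vanish and the identity holds trivially, so I can assume $A$ and $B$ are disjoint. Denote by $\pi_{AB}$ the permutation sorting $(A,B)$ and by $\pi_{BA}$ the permutation sorting $(B,A)$; these act on the positions $\{1,\dotsc,\len{A}+\len{B}\}$.

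Next, I would introduce the block-swap permutation $\tau$ that maps the list $(A,B)$ to the list $(B,A)$ by moving the entire block $B$ past the entire block $A$. Since $\pi_{AB}$ and $\pi_{BA}\circ\tau$ produce the same sorted list from the same starting list $(A,B)$, we have $\pi_{AB}=\pi_{BA}\circ\tau$, and by the multiplicativity of the signature,
\begin{equation}
\sigma(A,B)=\sigma(\pi_{AB})=\sigma(\pi_{BA})\,\sigma(\tau)=\sigma(B,A)\,\sigma(\tau).
\end{equation}
Hence the lemma reduces to the claim $\sigma(\tau)=(-1)^{\len{A}\len{B}}$.

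To compute $\sigma(\tau)$, I would decompose the block swap into elementary transpositions in a bubble-sort manner: take the rightmost element of $A$ and move it past all $\len{B}$ elements of $B$ using $\len{B}$ adjacent transpositions; then do the same with the next element of $A$, and so on. Each element of $A$ contributes $\len{B}$ adjacent swaps, for a total of $\len{A}\len{B}$ adjacent transpositions. Since each adjacent transposition has signature $-1$, we conclude $\sigma(\tau)=(-1)^{\len{A}\len{B}}$, which gives~\eqref{eq:sigma-reverse-binom}.

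The argument is essentially bookkeeping; the only subtle point is making sure the composition identity $\pi_{AB}=\pi_{BA}\circ\tau$ is written in the correct order so that the sign multiplicativity applies cleanly. A clean alternative, if preferred, would be to give a short inductive proof on $\len{B}$: the base case $\len{B}=1$ says that moving a single element past $\len{A}$ items costs $(-1)^{\len{A}}$, and the inductive step peels off one element of $B$ at a time, each contributing a factor $(-1)^{\len{A}}$, yielding $(-1)^{\len{A}\len{B}}$ in total.
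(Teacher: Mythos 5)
Your proof is correct, and it reaches the conclusion by a genuinely different computation than the paper. Both arguments share the same skeleton: factor the sorting of $(A,B)$ through the sorting of $(B,A)$ via an intermediate rearrangement, and use multiplicativity of the signature. The difference lies in how the sign of that intermediate rearrangement is obtained. You use the block-swap permutation $\tau$ taking $(A,B)$ to $(B,A)$ and count its cost directly as $\len{A}\len{B}$ adjacent transpositions (each element of $A$ bubbled past all of $B$), so $\sigma(\tau)=(-1)^{\len{A}\len{B}}$ falls out immediately. The paper instead realizes the passage from $(A,B)$ to $(B,A)$ as a composition of three reversals --- reverse the whole concatenation to get $(\widebar{B},\widebar{A})$, then reverse each block separately --- and uses the formula $\sigma_{\mathrm{r}}(A)=(-1)^{\len{A}(\len{A}-1)/2}$ for the sign of a reversal, so the exponent $\len{A}\len{B}$ emerges from the arithmetic identity $\tfrac{(\len{A}+\len{B})(\len{A}+\len{B}-1)}{2}-\tfrac{\len{A}(\len{A}-1)}{2}-\tfrac{\len{B}(\len{B}-1)}{2}=\len{A}\len{B}$. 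Your route is the more elementary and self-contained of the two, and your explicit remark that the identity is trivial when $A$ and $B$ overlap (both signatures vanish) is a point the paper leaves implicit; the paper's route fits its house style of visual permutation diagrams and reuses the reversal-signature formula it states anyway. Either argument is complete.
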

\begin{proof}
Given a list $A$, let $\widebar{A}$ be the reversed list, namely the list where the order of all the elements is reversed. Counting the number of position jumps needed to reverse the list, we obtain the signature of this reversing operation as
\begin{equation}\label{eq:sign_reversal}
\sigma_\text{r}(A)= \sigma_\text{r}(\widebar{A}) = (-1)^{\len{A}-1+\len{A}-2+\dotsc+1} = (-1)^{\frac{\len{A}(\len{A}-1)}{2}}.
\end{equation}

The proof is based on the identity between two different ways of rearranging the concatenated list $(A,B)$ into the ordered list $\varepsilon(A,B)$, as depicted in~\figurename~\ref{fig:int-equiv}. 
\begin{figure}[htb]
\centering
\begin{subfigure}[t]{0.45\textwidth}
	\centering
\begin{tikzpicture}
\draw [blue, thick] (-3,5)node[black] {$\vert$} -- (1,5) node[black] {$\vert$};
\node[blue, above] at (-1,5) {$A$};
\draw [red, thick] (1,5)node[black] {$\vert$} -- (3,5) node[black] {$\vert$};
\node[red, above] at (2,5) {$B$};
\draw [white, thick] (0,3) -- (0,3) node[white]{A}; 
\draw [white, thick] (0,2) -- (0,2) node[white]{A};; 
\draw [thick] (-3,4)node[black] {$\vert$} -- (3,4) node[black] {$\vert$};
\node[above] at (0,4) {$\varepsilon(A,B)$};
\end{tikzpicture}
\end{subfigure}
\begin{subfigure}[t]{0.45\textwidth}
\centering
\begin{tikzpicture}
\draw [blue, thick] (-5,5)node[black] {$\vert$} -- (-1,5) node[black] {$\vert$};
\node[blue, above] at (-3,5) {$A$};
\draw [red, thick] (-1,5)node[black] {$\vert$} -- (1,5) node[black] {$\vert$};
\node[red, above] at (0,5) {$B$};
\draw [blue, thick] (-3,4) node[black] {$\vert$} -- (1,4) node[black] {$\vert$};
\node[blue, above] at (-1,4) {$\widebar{A}$};
\draw [red, thick] (-5,4)node[black]{$\vert$} -- (-3,4) node[black] {$\vert$};
\node[red, above] at (-4,4) {$\widebar{B}$};
\draw [blue, thick] (-3,3) node[black] {$\vert$} -- (1,3) node[black] {$\vert$};
\node[blue, above] at (-1,3) {$A$};
\draw [red, thick] (-5,3)node[black]{$\vert$} -- (-3,3) node[black] {$\vert$};
\node[red, above] at (-4,3) {$B$};
\draw [thick] (-5,2)node[black] {$\vert$} -- (1,2) node[black] {$\vert$};
\node[above] at (-2,2) {$\varepsilon(A,B)$};
\end{tikzpicture}
\end{subfigure}
\caption{Visual aid for the relation between $\sigma(A,B)$ and $\sigma(B,A)$.}	\label{fig:int-equiv}
\end{figure}
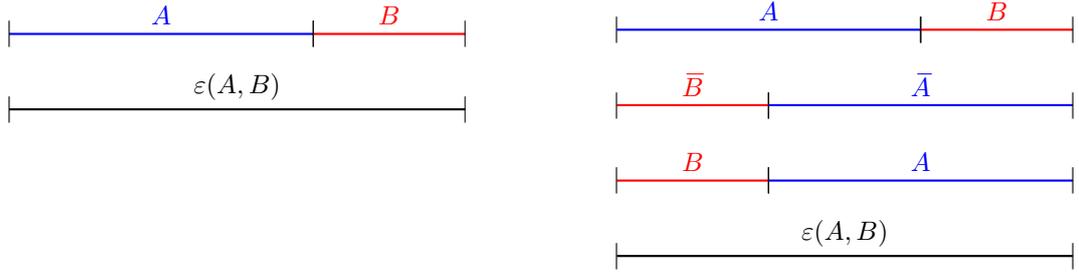

First, in the left column of~\figurename~\ref{fig:int-equiv} we depict how a single permutation with signature $\sigma(A,B)$ orders the list $(A,B)$. In the right column of~\figurename~\ref{fig:int-equiv} we depict how a different series of permutations achieves the same result. We start by reversing the concatenated list $(A,B)$, an operation with signature $\sigma_\text{r}(\widebar{B},\widebar{A})$. Then, we separately partially reverse the lists $\widebar{B}$ and $\widebar{A}$, operations with respective signatures $\sigma_\text{r}(\widebar{B})$ and $\sigma_\text{r}(\widebar{A})$. A final permutation with signature $\sigma(B,A)$ orders the list $(B,A)$ into $\varepsilon(A,B)$. Since the signature of a composition of permutations is the product of the signatures, we obtain that 
\begin{equation} \label{eq:sigma-reverse}
\sigma(A,B) = \sigma_\text{r}(\widebar{B},\widebar{A})\sigma_\text{r}(\widebar{A})\sigma_\text{r}(\widebar{B})\sigma(B,A).
\end{equation}

Using~Equation \eqref{eq:sign_reversal} in every $\sigma_\text{r}$ in~Equation \eqref{eq:sigma-reverse} and carrying out some simplifications yields~Equation \eqref{eq:sigma-reverse-binom}.
\end{proof}

\subsection{Relation between interior and exterior products} \label{sec-app-int-ext}
 We start with the expression for the left interior product~Equation \eqref{eq:left-int-equiv}. From~Equations \eqref{eq:hodge-transf} and~\eqref{eq:hodge-inv-transf}, we compute
\begin{align}
 \bigl( \ebf_I \wedge \ebf_J^{\hodge} \bigr)^{\hodgeinv} 
 &= \bigl( \Delta_{J,J}\sigma(J,J^c) \ebf_I \wedge \ebf_{J^c} \bigr)^{\hodgeinv}  \nonumber \\
  &= \Delta_{J,J} \Delta _{\varepsilon(I,J^c)^c, \varepsilon(I,J^c)^c} 
  \, \sigma(J,J^c) \sigma(I,J^c) \sigma(\varepsilon(I,J^c)^c, \varepsilon(I,J^c)) \, \ebf_{\varepsilon(I,J^c)^c} ,
\end{align}
and since $\Delta_{\varepsilon(I,J^c)^c, \varepsilon(I,J^c)^c} = \Delta _{J\setminus I,J\setminus I} $, we can conclude that
$
\Delta_{J,J} \Delta_{\varepsilon(I,J^c)^c, \varepsilon(I,J^c)^c}
= \Delta_{I,I}
$.
If we now compare the result with~Equation \eqref{eq:left-int-prod}, we need just to verify the identity
\begin{equation} \label{eq:sigma-composition}
\sigma\bigl(\varepsilon(I,J^c)^c,I\bigr) =
 \sigma(J,J^c) \sigma(I,J^c) \sigma(\varepsilon(I,J^c)^c, \varepsilon(I,J^c)) \,,
\end{equation}
or equivalently
\begin{equation}
\label{eq:sigmas-left-scalar}
 \sigma\bigl(\varepsilon(I,J^c)^c,I\bigr)  \sigma(J,J^c)=
\sigma(\varepsilon(I,J^c)^c, \varepsilon(I,J^c))
 \sigma(I,J^c) \,.
\end{equation}
The left-hand side of~Equation \eqref{eq:sigmas-left-scalar}
corresponds to taking the sets $\varepsilon(I,J^c)^c = J \setminus I$, $I$ and $J^c$, in this order, and then merging and sorting $J\setminus I$ with $I$ and then merging and sorting the resulting set $J$ with $J^c$, as shown in the left column of \figurename~\ref{fig:left-scalar}.
On the right-hand side, we start we the same three lists, but we first merge and sort $I$ with $J^c$, and then we get the whole list by merging and sorting the result with $J\setminus I$, as represented in the right column of \figurename~\ref{fig:left-scalar}. 
Thus, starting from the three sets and rearranging them in different ways, we get the same final ordered list, and since the signatures of the left-hand side and right-hand side are the same, and
Equation \eqref{eq:sigmas-left-scalar} is proved. As a consequence,~Equation \eqref{eq:left-int-equiv} is verified.
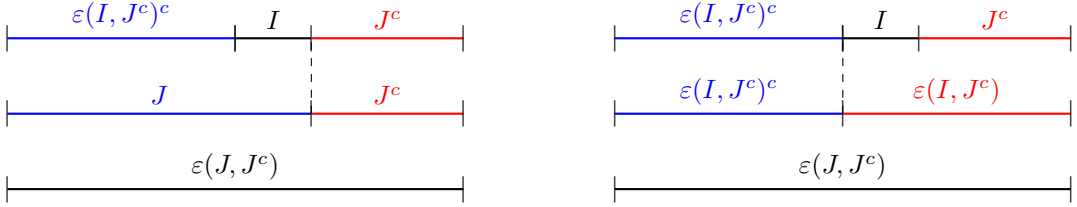
\begin{figure}[htb]
\centering
	\begin{subfigure}[t]{0.45\textwidth}
	\centering
\begin{tikzpicture}
\draw [blue, thick] (-3,3)node[black] {$\vert$} -- (1,3) node[black] {$\vert$};
\node[blue, above] at (-1,3) {$J$};
\draw [red, thick] (1,3) -- (3,3) node[black] {$\vert$};
\node[red, above] at (2,3) {$J^c$};
\draw [blue, thick] (-3,4) node[black] {$\vert$} -- (0,4) node[black] {$\vert$};
\node[blue, above] at (-1.5,4) {$\varepsilon(I,J^c)^c$};
\draw [thick] (0,4) node[black] {$\vert$} -- (1,4) node[black] {$\vert$};
\node[above] at (0.5,4) {$I$};
\draw [red, thick] (1,4) -- (3,4) node[black] {$\vert$};
\node[red, above] at (2,4) {$J^c$};
\draw[dashed] (1,3) -- (1,4);
\draw [thick] (-3,2)node[black] {$\vert$} -- (3,2) node[black] {$\vert$};
\node[above] at (0,2) {$\varepsilon(J, J^c)$};
\end{tikzpicture}
\end{subfigure}
\centering
	\begin{subfigure}[t]{0.45\textwidth}
	\centering
\begin{tikzpicture}
\draw [blue, thick] (-3,3) node[black] {$\vert$} -- (0,3) node[black] {$\vert$};
\node[blue, above] at (-1.5,3) {$\varepsilon(I,J^c)^c$};
\draw [red, thick] (0,3) -- (3,3) node[black] {$\vert$};
\node[red, above] at (1.5,3) {$\varepsilon(I,J^c)$};
\draw [blue, thick] (-3,4) node[black] {$\vert$} -- (0,4) node[black] {$\vert$};
\node[blue, above] at (-1.5,4) {$\varepsilon(I,J^c)^c$};
\draw [thick] (0,4) node[black] {$\vert$} -- (1,4) node[black] {$\vert$};
\node[above] at (0.5,4) {$I$};
\draw [red, thick] (1,4) -- (3,4) node[black] {$\vert$};
\node[red, above] at (2,4) {$J^c$};
\draw[dashed] (0,3) -- (0,4);
\draw [thick] (-3,2)node[black] {$\vert$} -- (3,2) node[black] {$\vert$};
\node[above] at (0,2) {$\varepsilon(J, J^c)$};
\end{tikzpicture}
\end{subfigure}
\caption{Visual aid for the permutations in~Equation \eqref{eq:sigmas-left-scalar}.}
\label{fig:left-scalar}
\end{figure}


Afterwards, we prove the formula for the right interior product~Equation \eqref{eq:right-int-equiv}.
Using~Equations \eqref{eq:hodge-transf} and~\eqref{eq:hodge-inv-transf}, we write
\begin{align}
\bigl( \ebf_I^{\hodgeinv} \wedge \ebf_J \bigr)^{\hodge} &= 
\bigl( \Delta_{I^c,I^c}\sigma(I^c, I)\ebf_{I^c} \wedge \ebf_{J} \bigr)^{\hodge} \nonumber \\ &=
\Delta_{I^c,I^c}\sigma(I^c, I) \sigma(I^c, J)\, \ebf_{\varepsilon(I^c,J)}^\hodge	\nonumber \\
&= \Delta_{I^c,I^c}\sigma(I^c, I) \sigma(I^c, J) \Delta_{\varepsilon(I^c,J), \varepsilon(I^c,J)} \sigma(\varepsilon(I^c,J), \varepsilon(I^c,J)^c)  \, \ebf_{\varepsilon(I^c,J)^c}.
\end{align}
Using that $\Delta_{\varepsilon(I^c,J), \varepsilon(I^c,J)}\Delta_{I^c, I^c}=\Delta_{J,J}$, in order to prove the validity of~Equation \eqref{eq:right-int-equiv}, we need to prove the relation 
\begin{equation}
 \sigma(J,\varepsilon(I^c,J)^c) =
\sigma(I^c,I)  \sigma(I^c,J)
 \sigma(\varepsilon(I^c,J),\varepsilon(I^c,J)^c) \,.
\end{equation}
We can prove it applying Lemma~\ref{lemma:reverse} to obtain the expression~Equation \eqref{eq:sigma-composition}, or by following the same procedure as before, paying attention to the difference that now list $J$ is included in $I$.

\subsection{Triple mixed product}
\label{sec-app-mix-prod}
Given two 1-vectors $\ub$ and $\vb$ and a $r$-vector $\wb$, we prove the relation
\begin{equation}
\ub\lintprod(\vb\wedge\wb) = (-1)^{r}(\ub\cdot\vb)\wb + \vb\wedge(\ub\lintprod\wb). \label{eq:app-triple}
\end{equation}
\begin{proof}
We start by evaluating $\ub\lintprod(\vb\wedge\wb)$ explicitly, separating terms $i=j$ and $i\ne j$, namely
\begin{equation}
\ub\lintprod(\vb\wedge\wb) = \sum _{\substack{i,j,I\\j\notin I,i\in I}} \Delta_{i,i} u_i v_j w_I \sigma(j,I) \sigma(I+j\setminus i,i) \ebf_{I+j\setminus i} + \sum _{\substack{i,I\\i\in I}} \Delta_{i,i} u_i v_i w_I \sigma(i,I) \sigma(I,i) \ebf_{I},
\end{equation}
then, using $\sigma(i,I) \sigma(I,i)=(-1)^r$ and adding and removing a term
$\displaystyle (-1)^r\sum _{\substack{i,I\\i\notin I}} \Delta_{i,i} u_i v_i w_I \ebf_{I}$, we get
\begin{equation}
\ub\lintprod(\vb\wedge\wb) = \sum _{\substack{i,j,I\\i\in I,j\notin I\setminus i}} \Delta_{i,i} u_i v_j w_I \sigma(j,I) \sigma(I+j\setminus i,i) \ebf_{I+j\setminus i} + (-1)^r \sum _{i,I} \Delta_{i,i} u_i v_i w_I \ebf_{I}.
\end{equation}
More concretely, the left-hand side $\ub\lintprod(\vb\wedge\wb)$ is given by 
\begin{align}
&\sum _{\substack{i,j,I\\j\notin I,i\in I}} \Delta_{i,i} u_i v_j w_I \sigma(j,I) \sigma(I+j\setminus i,i) \ebf_{I+j\setminus i}
- (-1)^r\sum _{\substack{i,I\\i\notin I}} \Delta_{i,i} u_i v_i w_I \ebf_{I}	\nonumber \\
&= \sum _{\substack{i,j,I\\j\notin I,i\in I}} \Delta_{i,i} u_i v_j w_I \sigma(j,I) \sigma(I+j\setminus i,i) \ebf_{I+j\setminus i}
- \sum _{\substack{i,j,I\\j=i,j\notin I\setminus i,i\in I}} \Delta_{i,i} u_i v_j w_I \sigma(j,I) \sigma(I+j\setminus i,i) \ebf_{I+j\setminus i}	\nonumber \\
&=\sum _{\substack{i,j,I\\i\in I,j\notin I\setminus i}} \Delta_{i,i} u_i v_j w_I \sigma(j,I) \sigma(I+j\setminus i,i) \ebf_{I+j\setminus i}.\label{eq:A27}
\end{align}

Similarly, we evaluate the right-hand side $\vb\wedge(\ub\lintprod\wb)$ as 
\begin{equation}\label{eq:A28}
\vb\wedge(\ub\lintprod\wb) = 
 \sum _{\substack{i,j,I\\i\in I,j\notin I\setminus i}} \Delta_{i,i} u_i v_j w_I \sigma(I\setminus i,i) \sigma(j,I \setminus i) \ebf_{I+j\setminus i} .
\end{equation}
Comparing~Equations \eqref{eq:A27} and~\eqref{eq:A28}
, it remains to prove the equality,
and now we prove the equality
\begin{equation}
\label{eq:app-prove-3prod}
 \sigma(j,I) \sigma(I+j\setminus i,i) =\sigma (I\setminus i,i)\sigma(j,I\setminus i).
\end{equation}
We rewrite Equation \eqref{eq:app-prove-3prod} multiplying both sides for $ \sigma(j,I)\sigma(j,I\setminus i)$ so that we obtain
\begin{equation}
\sigma(j,I\setminus i) \sigma(I+j\setminus i,i) =\sigma (I\setminus i,i) \sigma(j,I)
\end{equation}
which we verify with the help of \figurename~\ref{fig:forprod}. 
On the left column, we first merge $j$ with $I\setminus i$ and then the resulting list with $i$. On the right columns, the permutations first join $I\setminus i$ and $i$ and the resulting $I$ is then merged with $j$, getting the same result in both sides of the relation. 
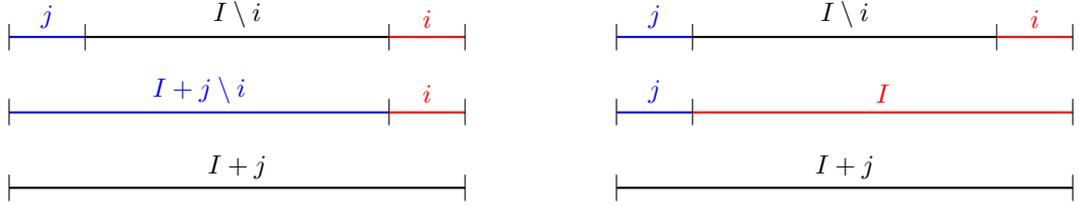
\begin{figure}[htb]
\centering
	\begin{subfigure}[t]{0.45\textwidth}
	\centering
\begin{tikzpicture}
\draw [blue, thick] (-3,4) node[black] {$\vert$} -- (-2,4) node[black] {$\vert$};
\node[blue, above] at (-2.5,4) {$j$};
\draw [thick] (-2,4)  -- (2,4) node[black] {$\vert$};
\node[above] at (0,4) {$I\setminus i$};
\draw [red, thick] (2,4) -- (3,4) node[black] {$\vert$};
\node[red, above] at (2.5,4) {$i$};
\draw [blue, thick] (-3,3) node[black] {$\vert$} -- (2,3) node[black] {$\vert$};
\node[blue, above] at (-0.5,3) {$I+j\setminus i$};
\draw [red, thick] (2,3) -- (3,3) node[black] {$\vert$};
\node[red, above] at (2.5,3) {$i$};
\draw [thick] (-3,2)node[black] {$\vert$} -- (3,2) node[black] {$\vert$};
\node[above] at (0,2) {$I+j$};
\end{tikzpicture}
\end{subfigure}
\centering
	\begin{subfigure}[t]{0.45\textwidth}
	\centering
\begin{tikzpicture}
\draw [blue, thick] (-3,4) node[black] {$\vert$} -- (-2,4) node[black] {$\vert$};
\node[blue, above] at (-2.5,4) {$j$};
\draw [thick] (-2,4)  -- (2,4) node[black] {$\vert$};
\node[above] at (0,4) {$I\setminus i$};
\draw [red, thick] (2,4) -- (3,4) node[black] {$\vert$};
\node[red, above] at (2.5,4) {$i$};
\draw [blue, thick] (-3,3) node[black] {$\vert$} -- (-2,3) node[black] {$\vert$};
\node[blue, above] at (-2.5,3) {$j$};
\draw [red, thick] (-2,3) -- (3,3) node[black] {$\vert$};
\node[red, above] at (0.5,3) {$I$};
\draw [thick] (-3,2)node[black] {$\vert$} -- (3,2) node[black] {$\vert$};
\node[above] at (0,2) {$I+j$};\end{tikzpicture}
\end{subfigure}
\caption{Visual aid for the identity $\sigma(j, I\setminus i) \sigma(I+j \setminus i,i) = \sigma(I\setminus i,i) \sigma(j,I)$.}
\label{fig:forprod}
\end{figure}
Thus, we can write
\begin{equation}
\ub\lintprod(\vb\wedge\wb) = \sum _{\substack{i,j,I\\i\in I,j\notin I\setminus i}} \Delta_{i,i} u_i v_j w_I \sigma(I\setminus i,i) \sigma(j,I \setminus i) \ebf_{I+j\setminus i}
+ (-1)^r \left( \sum _{i} \Delta_{i,i} u_i v_i \right)\left(\sum _{I}  w_I \ebf_{I}\right),
\end{equation}
where we identify the term $(-1)^{r}(\ub\cdot\vb)\wb$, and finally conclude
\begin{equation}
\ub\lintprod(\vb\wedge\wb) - \vb\wedge(\ub\lintprod\wb) = 
(-1)^r (\ub \cdot \vb) \wb,
\end{equation}
which proves our initial formula.
\end{proof}

\bibliographystyle{IEEEtran}	
\bibliography{IC-bib-th.bib}

\end{document}